\def\textbf#1{{\bf #1}}
\def\be{\begin{equation}}
\def\ee{\end{equation}}
\def\ben{\begin{eqnarray}}
\def\een{\end{eqnarray}}
\def\eea{\end{array}}
\def\bea{\begin{array}}
\newcommand{\Tr}[0]{\mathrm{Tr}}
\newcommand{\ot}[0]{\otimes}
\newcommand{\bei}{\begin{itemize}}
\newcommand{\eei}{\end{itemize}}
\newcommand{\ket}[1]{|#1\rangle}
\newcommand{\bra}[1]{\langle#1|}
\definecolor{myurlcolor}{rgb}{0,0,0.7}
\definecolor{myrefcolor}{rgb}{0.8,0,0}
\newtheorem{thm}{Theorem}
\newtheorem{lemma}{Lemma}
\theoremstyle{definition}
\begin{document}

\title{Energy as a detector of nonlocality of many-body spin systems}

\author{J. Tura}

\affiliation{ICFO - Institut de Ciencies Fotoniques, The Barcelona Institute of Science and Technology, 08860 Castelldefels (Barcelona), Spain}
\affiliation{Max-Planck-Institut f\"ur Quantenoptik, Hans-Kopfermann-Stra{\ss}e 1, 85748 Garching, Germany}

\author{G. De las Cuevas}

\affiliation{Max-Planck-Institut f\"ur Quantenoptik, Hans-Kopfermann-Stra{\ss}e 1, 85748 Garching, Germany}
\affiliation{Institut f\"ur theoretische Physik, Universit\"at Innsbruck, Technikerstr.\ 21a, 6020 Innsbruck, Austria}

\author{R. Augusiak}
\affiliation{Center for Theoretical Physics, Polish Academy of Sciences, Aleja Lotnik\'ow 32/46, Warsaw, Poland}
\author{M. Lewenstein}
\affiliation{ICFO - Institut de Ciencies Fotoniques, The Barcelona Institute of Science and Technology, 08860 Castelldefels (Barcelona), Spain}
\affiliation{ICREA - Instituci\'o Catalana de Recerca i Estudis Avan\c{c}ats, 08011 Barcelona, Spain}
\author{A. Ac\'in}
\affiliation{ICFO - Institut de Ciencies Fotoniques, The Barcelona Institute of Science and Technology, 08860 Castelldefels (Barcelona), Spain}

\affiliation{ICREA - Instituci\'o Catalana de Recerca i Estudis Avan\c{c}ats, 08011 Barcelona, Spain}

\author{J. I. Cirac}
\affiliation{Max-Planck-Institut f\"ur Quantenoptik, Hans-Kopfermann-Stra{\ss}e 1, 85748 Garching, Germany}

\begin{abstract}

We present a method to show that low-energy states of quantum many-body interacting systems in one spatial dimension are nonlocal. We assign a Bell inequality to the Hamiltonian of the system in a natural way and we
efficiently find its classical bound using dynamic programming. The Bell inequality is such that its quantum value for a given state, and for
appropriate observables, corresponds to the energy of the state. Thus, the presence of nonlocal correlations can be certified for states of low enough energy. The method can also be used to optimize certain Bell inequalities: in the translationally invariant (TI) case, we provide an exponentially faster computation of the classical bound and analytically closed expressions of the quantum value for appropriate observables and Hamiltonians. The power and generality of our method is illustrated through four representative examples: a tight TI inequality for 8 parties, a quasi TI uniparametric inequality for any even number of parties, ground states of spin-glass systems, and a non-integrable interacting XXZ-like Hamiltonian. Our work opens the possibility for the use of low-energy states of commonly studied Hamiltonians as multipartite resources for quantum information protocols that require nonlocality.
\end{abstract}

\keywords{} \pacs{}

\maketitle

\section{Introduction}
Nonlocality is a fundamental property of Nature in which the statistics obtained by performing some local measurements on some composite quantum systems cannot be reproduced by any local hidden variable model \cite{bell64}. These so-called nonlocal correlations cannot be mimicked by any local deterministic strategy, even if assisted by shared randomness \cite{Fine82}. Nonlocality is detected by the violation of a Bell inequality \cite{ReviewNonlocality} and it has recently been demonstrated in three loophole-free Bell experiments \cite{BellExperimentDelft, BellExperimentVienna, BellExperimentIllinois}. Detection of nonlocality is a sufficient condition to demonstrate, in a device-independent (DI) way, that the state producing such correlations is entangled. From an operational point of view, nonlocality is a resource that enables the implementation of DI quantum information protocols, such as DI quantum key distribution \cite{AcinQuantumCrypto, DIQKDAcin}, DI randomness expansion \cite{DIRE} or amplification \cite{DIRA, DIRAAcin}, or DI self-testing \cite{Self-Testing, STXupitx}.

The study of quantum many-body systems has benefited during the last decades from insights of the field of quantum information; in particular concerning the understanding of their correlations \cite{RMPEntanglement, RMPUltracold}. This has, however, mostly focused on the study and experimental detection of entanglement \cite{EntanglementCollisions, EntanglementNegativeWigner3kAtoms, SpinSqueezedBEC}, while the role of nonlocal correlations, which are stronger, remains rather unexplored. There are at least three reasons for that: First, the known Bell inequalities for multipartite systems involve correlations among many particles \cite{NonlocalityNBosons, Svetlichny, ZukowskiBrukner, WernerWolf, BIGraphStates}, thus rendering their measurement a formidably challenging task. Second, the mathematical characterization of nonlocal correlations is an NP-hard problem \cite{Babai}. Third, the size of the description of multipartite quantum states grows, in general, exponentially with the system size, posing a strong barrier to the analysis of the quantum correlations in large systems. However, recent advances \cite{TASVLA, AnnPhys} have shown that, by measuring only one- and two-body correlation functions, nonlocality can be revealed in some multipartite quantum systems, opening the way to its detection in many-body systems \cite{NonlocalityBEC} (see also \cite{PezzePRA}).

In this work we show that the ground states of some quantum spin Hamiltonians in one spatial dimension are nonlocal. We assign a Bell inequality to the given Hamiltonian in a natural way and we calculate its classical bound using dynamic programming. The Bell inequality is constructed in such a way that, for appropriate quantum observables, the Bell operator coincides with the Hamiltonian. The idea is that if the ground state energy is beyond the classical bound, this signals the presence of nonlocal correlations in the ground state. The ground state energy is computed by exact diagonalization using the Jordan-Wigner (JW) transformation, which maps the system of spins to a quadratic system of fermions. The method just presented can also be seen from the opposite point of view, namely, as a way to optimize certain classes of Bell inequalities for many-body systems under some quantum observables.
We also study the translationally invariant (TI) setting, in which we provide an exponentially faster algorithm to find the classical bound and we obtain analytical results for the quantum value\footnote{In this work, we refer to the \textit{quantum value} as the expectation value of the ground state of the Bell operator under appropriate quantum observables (such that it matches the Hamiltonian). This should not be confused with the \textit{quantum bound} of a Bell inequality, which is the infimum over all quantum states and observables}.
Then we illustrate our framework by applying it to four representative examples:
First, a tight TI Bell inequality for $8$ parties. Second, a quasi TI Bell inequality for any even number of parties. Third, we show that the ground state of an XY spin glass has nonlocal correlations in some parameter region. Finally, a non-integrable interacting XXZ-like Hamiltonian to which we assign a variation of Gisin's Elegant Bell inequality \cite{GisinElegant} and we find its quantum value numerically using matrix product states and the density matrix renormalization group \cite{iTensor}. This shows that our method is not limited to Hamiltonians that can be mapped to a system of free fermions via the JW transformation, but it can be applied to any spin Hamiltonian with short-range interactions in one spatial dimension.


The paper is organized as follows: In Section \ref{sec:method} we present our method. In Section \ref{sec:otherway} we show how to optimize certain classes of many-body Bell inequalities. In Section \ref{sec:ti} we particularize our results to the TI case. In Section \ref{sec:examples} we illustrate our methods with four examples. Finally, in Section \ref{sec:conclusion} we conclude and explore future lines of research.

\section{The Method}
\label{sec:method}
In this section we present a method to analyze when the ground state of some spin Hamiltonians is nonlocal; namely, when the quantum value is beyond the classical bound. We first present the setting (Section \ref{subsec:thesetting}). We then construct a Bell inequality from the given Hamiltonian (Section \ref{subsec:ConstructingBI}) and we compute its classical bound using dynamic programming (Section \ref{subsec:classical}). Then we find the quantum value of the inequality (Section \ref{subsec:quantumvalue}). To this end, we first review the exact diagonalization of quadratic fermionic Hamiltonians (Section \ref{sec:quantum}) and then we relate the spin Hamiltonian and the fermionic Hamiltonian via the Jordan-Wigner (JW) transformation (Section \ref{subsec:fermion2spins}).

\subsection{The setting}
\label{subsec:thesetting}

We consider quantum spin-$1/2$ Hamiltonians of $n$ particles in one spatial dimension (henceforth, one-dimensional) with periodic boundary conditions, with short-range interactions, up to $R$ neighbors:
\begin{equation}
 {\cal H} = \sum_{i=0}^{n-1} \left(t^{(i)} \sigma_z^{(i)} + \sum_{r=1}^R \sum_{\alpha, \beta=0}^1 t^{(i,r)}_{\alpha, \beta}\mathrm{Str}^{(i,r)}_{\alpha, \beta}\right),
 \label{eq:SpinHamiltonian}
\end{equation}
where $t^{(i)}$ and $t^{(i,r)}_{\alpha, \beta}$ are real parameters,
\begin{equation}
\mathrm{Str}^{(i,r)}_{\alpha, \beta}:= \sigma_{x+\alpha}^{(i)}\sigma_z^{(i+1)}\cdots \sigma_z^{(i+r-1)} \sigma_{x+\beta}^{(i+r)}
\label{eq:quadraticoperatorslist}
\end{equation}
are the so-called String operators, $\sigma_x^{(i)}$, $\sigma_y^{(i)}$ and $\sigma_z^{(i)}$ are the Pauli Matrices acting on the $i$-th site and the indices of the sites are taken \textit{modulo} $n$. We denote $x+1:=y$ for short.

On one hand, as we explain it in detail in Section \ref{subsec:quantumvalue}, this choice of Hamiltonians is convenient from a purely mathematical perspective, as these/they can be exactly diagonalized via the JW transformation. On the other hand, Hamiltonians of the form (\ref{eq:SpinHamiltonian}) are general enough to include many cases of interest.
For instance, the case $R=1$ corresponds to a one-dimensional spin-$1/2$ Hamiltonian with nearest-neighbors interactions, under a transverse magnetic field:
\begin{equation}
 {\cal H} = \sum_{i=0}^{n-1}\left(t^{(i)}\sigma_z^{(i)}+ \sum_{\alpha, \beta = 0}^1t^{(i,1)}_{\alpha, \beta}\sigma_{x+\alpha}^{(i)}\sigma_{x+\beta}^{(i+1)}\right).
\end{equation}
Nevertheless, our method can also be applied to Hamiltonians with local interactions that do not rely on the String operator structure, at the price of having to compute their ground state energy numerically, as we illustrate it in Example \ref{ex:GisinElegant}.

\subsection{Construction of a Bell inequality}
\label{subsec:ConstructingBI}
In this section we study the classes of Bell inequalities that are relevant for our work. We construct Bell inequalities such that, for some quantum observables, the corresponding Bell operator ${\cal B}$ satisfies
\begin{equation}
 {\cal B} = \beta_C {\mathbbm{1}} + {\cal H},
 \label{eq:BellOpGeneral}
\end{equation}
where  $\beta_C \in \mathbbm{R}$ is the so-called classical bound and ${\cal H}$ is defined as in Eq. (\ref{eq:SpinHamiltonian}).
We use the following convention in writing down Bell inequalities. We want to obtain Bell inequalities of the form $I + \beta_C \geq 0$. The part of the Bell operator that corresponds to $I$ will be the Hamiltonian $\cal H$, and thus, states with low enough energy, lower than $-\beta_C$, will give a violation of the Bell inequality. The classical bound $\beta_C$ is defined as
\begin{equation}
\beta_C = - \min_{\mathrm{LHVM}} I,
\end{equation}
where the minimum is taken over all Local Hidden Variable Models (LHVM) (cf. Eq. \cite{ReviewNonlocality}).
Observe that the quantum state that minimizes the expectation value of ${\cal B}$ is the ground state of ${\cal H}$.

This motivates the study of Bell inequalities in the following scenario: We have $n$ parties with $m$ dichotomic observables with outcomes $\pm 1$ at their disposal. We denote the choices of measurements by $\mathbf{k}=(k_0,\ldots,k_{n-1})$, with $0\leq k_i < m$, and the outcomes they produce by $\mathbf{a}=(a_0,\ldots, a_{n-1})$ with $a_i = \pm 1$. We denote by $P(\mathbf{a}|\mathbf{k})$ the vector of conditional probabilities collected when they perform the Bell experiment. Due to the no-signalling principle, the marginals observed by any subset of parties do not depend on the choices of measurements performed by the rest; thus $P(\{a_i\}_{i\in S}|\{k_i\}_{i\in S})$ is well defined on any subset $S$. In the case of dichotomic measurements, one normally 
works with the \emph{correlators} $M_{\mathbf{k}}^{(i,r)}$,
\begin{equation}
M_{\mathbf{k}}^{(i,r)}:=\sum_{\mathbf{a}} \left(\prod_{j=0}^r a_{i+j}\right)P(\mathbf{a}|\mathbf{k}),
\label{eq:def:correlators}
\end{equation}
where abusing notation we are now denoting $\mathbf{k}=(k_i,\ldots, k_{i+r})$ and $\mathbf{a}=(a_i,\ldots, a_{i+r})$.

If $R>1$, we will consider $m+1$ measurements, due to the $\sigma_z$ inbetween (cf. Eq. (\ref{eq:quadraticoperatorslist})).
The Bell inequalities that are naturally tailored to a Bell operator of the form of Eq. (\ref{eq:BellOpGeneral}) can be written as $I + \beta_C \geq 0$, where
\begin{equation}
 I:=\sum_{i=0}^{n-1} \left(\gamma^{(i)}M_m^{(i,0)} + \sum_{r=1}^R\sum_{k,l=0}^{m-1} \gamma_{k,l}^{(i,r)} M_{(k,m,\ldots,m,l)}^{(i,r)}\right)
 \label{eq:BellIneqGeneral}
\end{equation}
and $\gamma^{(i)}$, $\gamma_{k,l}^{(i,r)}$ are real parameters that depend on the $t^{(i)}$ and $t^{(i,r)}_{\alpha, \beta}$ of Eq. (\ref{eq:SpinHamiltonian}).
Despite the fact that $I$ contains up to $(R+1)$-body terms, its coefficients $\gamma^{(i)}$ and $\gamma_{k,l}^{(i,r)}$ show that it is essentially a $2$-body Bell inequality, since the measurement choice of the parties in the middle of the string is fixed to be $m$, in the sense that the number of coefficients $\gamma^{(i)}$ and $\gamma_{k,l}^{(i,r)}$ is the same as in a $2$-body Bell inequality.
Note that the number of measurements $m$ performed on the $x$-$y$ plane will not affect the form of Eq. (\ref{eq:SpinHamiltonian}). The only measurement that is not performed in this plane is in the $z$ direction; therefore, we treat it as a special case and we say we have $m+1$ measurements.

\subsubsection{The classical optimization}
\label{subsec:classical}
We now describe how to efficiently compute the classical bound $\beta_C$ of the Bell inequalities introduced in Eq. (\ref{eq:BellIneqGeneral}). It is well known that for a generic Bell inequality for $n$ parties, $m$ measurements and $d$ outcomes the classical bound cannot be found efficiently, as it requires solving a linear program with $d^{mn}$ constraints \cite{ReviewNonlocality}. The particular form of the Bell inequalities we are considering, however, allows us to find an algorithm that, in the many-body regime (i.e., for fixed $d$, $m$ and $R$) has $O(n)$ complexity.
For simplicity, we consider a Bell inequality $I$ of a slightly more general form than those of Eq. (\ref{eq:BellIneqGeneral}) and with $d=2$ outcomes,
\begin{equation}
I:=\sum_{i=0}^{n-1}\sum_{r=0}^R\sum_{\mathbf{k}=0}^{m^{r+1}-1} \gamma_{\mathbf{k}}^{(i,r)}M_{\mathbf{k}}^{(i,r)},
\label{eq:BellIneq}
\end{equation}
where $\mathbf{k}=(k_i,\ldots,k_{i+r})$ and $0\leq k_j < m$. After we have presented our method, it will become clear that there is no loss of generality in considering dichotomic measurements, as the result can be straightforwardly generalized to an arbitrary $d$.

To find $\beta_C$, we need to optimize $I$ over all local hidden variable models. By Fine's Theorem \cite{Fine82}, it suffices to optimize $I$ over all deterministic local strategies, in which the correlators $M_{\mathbf{k}}^{(i,r)}$ factorize as
\begin{equation}
M_{\mathbf{k}}^{(i,r)} = \prod_{j=0}^{r+1} M_{k_j}^{(i+j)},
\label{eq:LHVDet}
\end{equation}
where $M_{k_i}^{(i)}$ can be $\pm 1$. Thus,
\begin{equation}
 \beta_C=-\min_{M_{k_i}^{(i)}=\pm 1} I,
 \label{eq:minOBC}
\end{equation}
where the minimum is taken over all possible assignments of $M_{k_i}^{(i)}$ to $\pm 1$ for all $i$ and $k$.

Let us first solve the case with Open Boundary Conditions (OBC); \textit{i.e.}, the case where $\gamma_{\mathbf{k}}^{(i,r)}=0$ when $i+r\geq n$. We shall follow a dynamic programming procedure \cite{ClassicalSpins} that splits the minimization (\ref{eq:minOBC}) into nested parts.

To this end, we represent a local deterministic strategy as a matrix $\mathbf{M}$ whose rows index the measurement choices and whose columns index the party, and the entry $(k,i)$ is the value assigned in the deterministic strategy to the $k$-th observable of the $i$-th party. Thus, $\mathbf{M}$ is a $(m \times n)$ matrix whose entries take integer values $+1$ or $-1$. Let $\mathbf{M}^{(i,R)}$ denote the submatrix of $\mathbf{M}$ consisting of columns $i$ to $i+R-1$.
The goal of the dynamic programming is to find an optimal $\mathbf{M}$, which will give $\beta_C$. This will be obtained recursively.

Let $h_i$ be the function defined for $i>0$ as
\begin{equation}
h_i(\mathbf{M}^{(i-1,R+1)}):=\sum_{r=0}^R\sum_{\mathbf{k}=0}^{m^{r+1}-1} \gamma_{\mathbf{k}}^{(i-1,r)}M_{\mathbf{k}}^{(i-1,r)}.
\label{eq:hsubi}
\end{equation}
Note that because of Eq. (\ref{eq:LHVDet}), $h_i(\mathbf{M}^{(i-1,R+1)})$ is a real number.
Now we define a recursive function $E_i$ which contains the optimization up to the $(i-1)$-th site. Explicitly, $E_0(\mathbf{M}^{(0,R)}):=0$ and
\begin{equation}
 E_i(\mathbf{M}^{(i,R)}):=\min_{M_k^{(i-1)}} \left\{E_{i-1}(\mathbf{M}^{(i-1,R)})+h_i(\mathbf{M}^{(i-1,R+1)})\right\},
 \label{eq:Ei}
\end{equation}
for $i>0$.
Note that $E_i$ optimizes the local deterministic strategy on the $(i-1)$-th party, which amounts to choosing the optimal values of the $(i-1)$-th column of $\mathbf{M}$. This naturally depends on the next $R$ columns, which we need to consider as variables and calculate $E_i$ for all their possible values.
Therefore, to efficiently evaluate $E_i(\mathbf{M}^{(i,R)})$, we only need to access the stored values of $E_{i-1}$ on $\mathbf{M}^{(i-1,R)}$ and not $E_{i-2}$ and so on. The computation of $E_i(\mathbf{M}^{(i,R)})$ thus requires evaluation of the $2^m$ different deterministic local strategies corresponding to the values $M_k^{(i-1)}$ (see Figure \ref{fig:DynProg}).

\begin{center}
\begin{figure}
 \includegraphics[width=0.9\columnwidth]{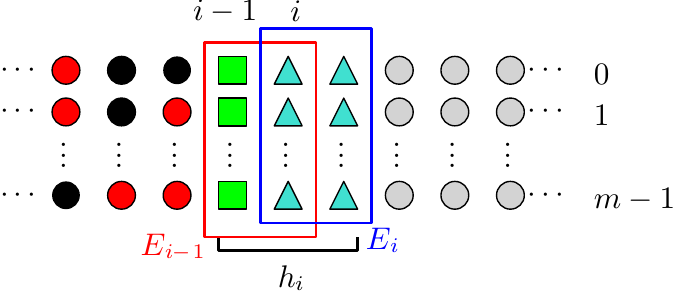}
\caption{The representation of the matrix $\mathbf{M}$ at the step where we compute $E_i$ for an inequality with dichotomic measurements with $R=2$. 
The color (red or black) corresponds to the outcome assigned to that observable. At the $i$-th step, an optimal local deterministic strategy for parties $0$ to $i-2$ is already set, and we are minimizing the local deterministic strategy at the $(i-1)$-th site (green squares). In order to perform the minimization, we need the values of the $i$-th and $(i+1)$-th parties to be fixed (blue triangles); i.e., we need to give a value to $\mathbf{M}^{(i,2)}$. We check the $2^{m}$ possible values for the observables at site $i-1$ and we find an optimal assignment for them.}
\label{fig:DynProg}
\end{figure}
\end{center}

The classical bound (\ref{eq:minOBC}) is obtained as the end product of this minimization procedure, namely 
\begin{equation}
\beta_C=-E_n(\mathbf{M}^{(n,R)}). 
\end{equation}

Note that $E_n$ is actually independent of $\mathbf{M}^{(n,R)}$ because we are in the OBC case, so all the $\gamma_{\mathbf{k}}$'s that would extend beyond the $n$-th party are zero.

This procedure can be easily generalized if the Bell inequality has $d>2$ outcomes. In this case, one has to take into account that the notion of correlators introduced in Eq. (\ref{eq:def:correlators}) is no longer well defined (as the one-to-one correspondence between probabilities $P(\mathbf{a}|\mathbf{k})$ and correlators $M_{\mathbf{k}}^{(i,r)}$ no longer holds). Thus, one needs to express the Bell inequality in terms of probabilities:
\begin{equation}
I = \sum_{i=0}^{n-1}\sum_{r=0}^R \sum_{\mathbf{k}=0}^{m^{r+1}-1} \sum_{\mathbf{a}=0}^{d^{r+1}-1} \gamma_{\mathbf{k}, \mathbf{a}}^{(i,r)}P(\mathbf{a}|\mathbf{k}),
\end{equation}
where $\mathbf{a}=(a_i,\ldots,a_{i+r})$ and $\mathbf{k}$ has the same structure.
%
Note that $P(\mathbf{a}|\mathbf{k})$ implicitly depends on $i$ and $r$ (cf. Eq. (\ref{eq:def:correlators})), thus being a marginal $(r+1)$-body probability distribution. Again, like in the $d=2$ case, by virtue of Fine's theorem \cite{Fine82}, it suffices to consider those probability distributions in which $P(\mathbf{a}|\mathbf{k})$ factorizes; \textit{i.e.,}
\begin{equation}
P(\mathbf{a}|\mathbf{k}) = \prod_{j=0}^{r} P(a_{i+j}|k_{i+j}).
\end{equation}
Since each $P(a_i|k_i)$ can now take $d$ different deterministic values, the minimization in Eq. (\ref{eq:Ei}) is carried over variables that can take up to $d$ different values. For instance, in Fig. \ref{fig:DynProg}, instead of only red and black, one would have $d$ different possible colors.

In summary, the overall minimization is performed in $O(n d^{m(R+1)})$ time. Since $d$, $m$ and $R$ are fixed, the overall scaling is $O(n)$, although in practice it is advisable to bear in mind the prefactor. This algorithm gives not only the classical bound $\beta_C$ (for which it requires $O(d^{mR})=O(1)$ memory) but it also constructs a deterministic local strategy achieving it (requiring $O(mn)=O(n)$ memory).

Let us now consider the Periodic Boundary Conditions (PBC) case. This can be reduced to the OBC case by splitting the Bell inequality with PBC at an arbitrary position $i$ while fixing a value of $\mathbf{M}^{(i,R)}$.
The correlators that contain parties in the set $\{i, \ldots i+R\}$ can be effectively moved to the left/right of the cut by updating the coefficients of $I$: If all the parties on the correlator belong to the set $\{i,\ldots,i+R\}$, then this correlator has a definite value which becomes a constant in the optimization; if just one party lies outside, then the one-body weights at sites $i-1$ or $i+R+1$ can be updated accordingly, and so on. In Appendix \ref{app:PBC} we explain this procedure in detail.

Since the amount of $\mathbf{M}^{(i,R)}$'s for which the actual minimum of $I$ is achieved is finite, 
the PBC case is solved by considering 
$d^{m R}$ OBC cases. This does not change the overall complexity, but it increases the prefactor.
The classical bound of $I$ is in the PBC case found
in $O(nd^{m(2R+1)}) = O(n)$ time and $O(1)$ memory for $\beta_C$ and $O(n)$ memory for the deterministic local strategy.

Note that it is crucial for our method that $R$ is constant. If $R$ were comparable to $n$, the dynamic programming procedure would no longer be efficient. In fact, optimizing one-dimensional Bell inequalities with full-range correlators is equivalent to optimizing general Bell inequalities, where results in computer science indicate that this is an extremely hard problem \cite{ReviewNonlocality, Babai}. Note that, even in the bipartite case, deciding whether a probability distribution for $m$ dichotomic measurements is local is NP-complete \cite{Avis2004}.

\subsection{The quantum value}
\label{subsec:quantumvalue}
In this section, we show how to find the ground state energy of the Hamiltonian ${\cal H}$ introduced in Eq. (\ref{eq:SpinHamiltonian}). To do that, we first review the exact diagonalization of quadratic Hamiltonians in fermionic operators (Section \ref{sec:quantum}). Then, we transform the spin operator ${\cal H}$ to a quadratic fermionic operator $\hat{\cal H}$ via the JW transformation \cite{JWTrafo}; see also \cite{NielsenNotesJW, EntanglementFermionicSystems} (Section \ref{subsec:fermion2spins}).

Throughout this paper, we will denote fermionic operators with a hat and spin operators without. Note that the JW transformation is a global operation that breaks the sense of locality in a Bell experiment, but we use it merely as a mathematical tool to find the ground state energy of $\cal H$.

Recall that a quadratic Hamiltonian in fermionic operators has the form
\begin{equation}
 {\hat{\cal H}}=\sum_{0\leq i\neq j < n} A_{ij}\hat{a}_i^\dagger \hat{a}_j  + B_{ij}\hat{a}_i \hat{a}_j + \mathrm{h.c.},
 \label{eq:QuadraticAs}
\end{equation}
where $A_{ij}$ and $B_{ij}$ are complex numbers and $\mathrm{h.c.}$ stands for hermitian conjugate.
The $\hat{a}_i$ ($\hat{a}_i^\dagger$) are annihilation (creation) operators of a fermionic system of $n$ Dirac modes, indexed by $i$, with $0 \leq i < n$.
The $i$-th mode is assigned an annihilation operator $\hat{a}_i$ and a creation operator $\hat{a}_i^\dagger$. The creation operator $\hat{a}_i^\dagger$, acting on the vacuum state $\ket{\Omega}$, populates it with a single excitation: $\ket{1_i}:=\hat{a}_i^\dagger \ket{\Omega}$, whereas the annihilation operator satisfies $\hat{a}_i\ket{\Omega}=0 \ \forall i$. Such operators satisfy the following canonical anticommutation relations (CARs):
\begin{equation}
 \{\hat{a}_i, \hat{a}_j^\dagger\}=\delta_{i,j}{\hat{\mathbbm{1}}},\quad \{\hat{a}_i, \hat{a}_j\}=0 \qquad \forall i,j.
 \label{eq:CARsDirac}
\end{equation}
The CARs (\ref{eq:CARsDirac}) imply that, without loss of generality, the matrices $A$ and $B$ in Eq. (\ref{eq:QuadraticAs}) can be taken to be Hermitian and antisymmetric, respectively.

\subsubsection{Exact diagonalization}

\label{sec:quantum}
In this section we compute the ground state energy of a Hamiltonian of the form (\ref{eq:QuadraticAs}).
The operator $\hat{{\cal H}}$ of Eq. (\ref{eq:QuadraticAs}) can be written in terms of Majorana fermions as
\begin{equation}
 \hat{{\cal H}}=\frac{\mathbbm{i}}{2}\sum_{i,j=0}^{n-1}\sum_{\alpha, \beta = 0}^1 H_{i,\alpha;j, \beta}\hat{c}_{i,\alpha}\hat{c}_{j,\beta},
 \label{eq:QuadraticHamiltonian}
\end{equation}
where the $2n$ Majorana fermions $\hat{c}_{i,\alpha}$ are defined as
\begin{equation}
 \hat{c}_{i,\alpha}:=\mathbbm{i}^\alpha (\hat{a}_i + (-1)^\alpha \hat{a}_i^\dagger ), \quad \alpha \in \{0,1\},\ 0 \leq i < n,
\end{equation}
where $\mathbbm{i}^2+1=0$.
Note that $\hat{c}_{i,\alpha}$ are Hermitian operators, and they satisfy the CARs
\begin{equation}
 \{\hat{c}_{i,\alpha}, \hat{c}_{j,\beta}\} = 2 \delta_{i,j}\delta_{\alpha,\beta} \hat{\mathbbm{1}}.
 \label{eq:CARsMajorana}
\end{equation}
It follows that the matrix $H$ in Eq. (\ref{eq:QuadraticHamiltonian}) can be taken real antisymmetric due to Eq. (\ref{eq:CARsMajorana}) without loss of generality. Since every real antisymmetric matrix $H$ admits a Williamson decomposition $H=OJO^T$ \cite{WilliamsonNormalForm}, where $O$ is an orthogonal transformation and
\begin{equation}
 J=\bigoplus_{k=0}^{n-1} \left(
 \begin{array}{cc}
  0&\varepsilon_k\\
  -\varepsilon_k&0
 \end{array}
\right),
\label{eq:Williamson}
\end{equation}
the operator $\hat{{\cal H}}$ is diagonalized by introducing a new set of Majorana operators $\hat{d}_{k,a}$:
\begin{equation}
 \hat{{\cal H}}=\mathbbm{i}\sum_{k=0}^{n-1}\varepsilon_k\hat{d}_{k,0}\hat{d}_{k,1},
 \label{eq:free}
\end{equation}
where $\{\mathbbm{i}\hat{d}_{k,0}\hat{d}_{k,1}\}$ mutually commute and the new Majorana operators are defined as
\begin{equation}
\hat{d}_{k,a}:= \sum_{i,\alpha}O_{i,\alpha; k,a} \hat{c}_{i,\alpha},
\label{eq:orthogonaltrafo}
\end{equation}
satisfying the same CARs as in Eq. (\ref{eq:CARsMajorana}). Note that every orthogonal transformation $O \in {\cal O}(2n)$\footnote{The set of orthogonal matrices of size $n$ is denoted ${\cal O}(n)$.} takes a set of Majorana fermions $\{\hat{c}_{i,\alpha}\}$ into a new set $\{\hat{d}_{k,a}\}$ obeying the same CARs as Majorana fermions.
In Appendix \ref{APP:NumericalOrtho} we discuss details on how to obtain an $O$ for which Eq. (\ref{eq:Williamson}) holds.

The minimal eigenvalue $E_0$ of $\hat{\cal H}$ is then given by
\begin{equation}
E_0=\sum_{k=0}^{n-1} s_k\varepsilon_k,
\label{eq:GndStateEnergy}
\end{equation}
achieved on a simultaneous eigenstate of the operators $\{\mathbbm{i}\hat{d}_{k,0}\hat{d}_{k,1}\}$, with respective eigenvalue $s_k := -\mathrm{sign}(\varepsilon_k)$.

\subsubsection{From spins to fermions}

\label{subsec:fermion2spins}
In this section we review the JW transformation and we show that the spin Hamiltonians that can be diagonalized with the method described in Section \ref{sec:quantum} are precisely those of the form of Eq. (\ref{eq:SpinHamiltonian}).

The JW transformation establishes an isomorphism between the Fock space of $n$ Majorana modes and the $n$-qubit Hilbert space $(\mathbbm{C}^2)^{\otimes n}$. For Majorana fermions, the JW transformation can be expressed as
\begin{equation}
 \hat{c}_{i,\alpha} \leftrightarrow (-1)^\alpha \left(\prod_{j=0}^{i-1} \sigma_z^{(j)}\right) \sigma_{x+\alpha}^{(i)}, \quad \alpha \in \{0,1\}.
 \label{eq:JW}
\end{equation}
It follows that for every fermionic operator we obtain an operator acting on $({\mathbbm C}^2)^{\otimes n}$ and viceversa.
The operator $\hat{\cal H}$ as in Eq. (\ref{eq:QuadraticHamiltonian}) consists of the terms $\mathbbm{i}\hat{c}_{i,0}\hat{c}_{i,1}$, which correspond to $\sigma_z^{(i)}$, and 
\begin{equation}
 \mathbbm{i} (-1)^\beta \hat{c}_{i,1-\alpha}\hat{c}_{i+r,\beta}, \quad \alpha, \beta \in \{0,1\}, r\geq 1,
 \label{eq:quadraticoperatorslistfermions}
\end{equation}
which correspond to the string operators $\mathrm{Str}_{\alpha, \beta}^{i,r}$ (cf. Eq. (\ref{eq:quadraticoperatorslist})) if $i+r<n$.
If $i+r\geq n$, we need to define a global parity operator
\begin{equation}
 \hat{\mathbbm{P}}:=
 \prod_{i=0}^{n-1}\mathbbm{i}\hat{c}_{i,0}\hat{c}_{i,1} \leftrightarrow \prod_{i=0}^{n-1}\sigma_z^{(i)}=:\mathbbm{P}.
 \label{eq:def:P}
\end{equation}
In this case, the string operator takes the form
\begin{equation}
\begin{array}{lcl}
(-1)^{1+\alpha + \beta}\mathrm{Str}_{1-\beta, 1-\alpha}^{i+r-n,n-r} =  \mathrm{Str}_{\alpha, \beta}^{i,r} \ \mathbbm{P}.
\end{array}
\label{eq:POderNoPThatsthequestion}
\end{equation}

Note that, since $\cal H$ commutes with $\mathbbm{P}$, its ground state has a well defined eigenvalue $p = \pm 1$ of the parity. This fact will become relevant in Section \ref{sec:otherway}. 

\section{A new method to optimize Bell inequalities}
\label{sec:otherway}

The method described so far can also be seen from the opposite perspective: 
One considers Bell inequalities of the form of Eq. (\ref{eq:BellIneqGeneral}) with given coefficients $\gamma_{\mathbf{k}}^{(i,r)}$'s and one calculates its $\beta_C$ with dynamic programming.
On the other hand, we optimize the quantum value over a restricted set of measurements, namely, single qubit $\sigma_z$ measurements, and for the $r$-body correlators, qubit measurements in the $x-y$ plane for parties $i$ and $i+r$ and $\sigma_z$ measurements in the intermediate ones (cf. Eq. (\ref{eq:BellIneqGeneral})). The resulting Bell operator can be mapped to a system of fermions as in Eq. (\ref{eq:QuadraticHamiltonian}), whose diagonalization allows us to find the minimal quantum value.

The latter process is carried out as follows. Let us denote the $k$-th observable of the $i$-th party by ${\mathcal M}_k^{(i)}$, with $k$ ranging from $0$ to $m-1$ (or $m$ if $R>1$). We shall pick qubit observables parametrized as ${\mathcal M}_m^{(i)} = \sigma_z^{(i)}$ and ${\mathcal M}_k^{(i)} = \cos \varphi_{k}^{(i)} \sigma_x^{(i)} + \sin \varphi_{k}^{(i)} \sigma_y^{(i)}$ ($k < m$) and construct the Bell operator ${\cal B}$, which will be of the form (\ref{eq:BellOpGeneral}). 
Note that to build the Bell operator, one simply needs to substitute the correlators in (\ref{eq:BellIneqGeneral}) by the corresponding quantum observables:
\begin{equation}
M_{\mathbf{k}}^{(i,r)} \rightarrow \bigotimes_{j=0}^{r+1} {\mathcal M}_{k_j}^{(i+j)}.
\end{equation}
If there exists a quantum state $\rho$ for which $\Tr({\cal B} {\rho}) < 0$, then $\rho$ is nonlocal. 
If this is the case, we shall denote the quantum violation observed by $QV := \mathrm{Tr}({\cal B} {\rho})$. 
Note that $\rho$ can be taken, without loss of generality, as a projector onto a ground state of ${\cal H}$. 
We will look for the optimal measurement settings ${\varphi_k^{(i)*}}$ such that ${\cal B}$ has the minimal eigenvalue. 

The spin Hamiltonian $\cal H$ is again diagonalized by applying the JW transformation (cf. Eq. (\ref{eq:JW})), and Eq. (\ref{eq:SpinHamiltonian}) is almost transformed into Eq. (\ref{eq:QuadraticHamiltonian}) up to the 
string operators that cross the origin, which carry a parity operator $\hat{\mathbbm{P}}$.
Since the eigenstate with the lowest eigenvalue of $\hat{\cal H}$ has a well defined parity $p=\pm 1$ (because $[\hat{\cal H},\hat{\mathbbm{P}}]=0$), we can change $\mathbbm{P}$ by $p$ in Eq. (\ref{eq:POderNoPThatsthequestion}) so that $\hat{\cal H}$ is now quadratic. One has to make sure, though, that the superselection rule imposed by initially choosing $p$ is obeyed. That is, the ground state of $\hat{\cal H}$ needs to satisfy
\begin{equation}
p=(\det O) \prod_{k=0}^{n-1}s_k.
\label{eq:superselectionrule}
\end{equation}
Eq. (\ref{eq:superselectionrule}) stems from the fact that, under the transformations of Eq. (\ref{eq:orthogonaltrafo}), $\hat{\mathbbm{P}}$ is transformed as (see Appendix \ref{APP:parity} for a proof)
\begin{equation}
 \hat{\mathbbm{P}} =  (\det{O})\prod_{i=0}^{n-1}\mathbbm{i}\hat{d}_{i,0}\hat{d}_{i,1}.
 \label{eq:determinant}
\end{equation}
If Eq. (\ref{eq:superselectionrule}) does not hold, one has to modify Eq. (\ref{eq:GndStateEnergy}) accordingly by picking
\begin{equation}
E_0\rightarrow E_0 + 2 \min_k |\varepsilon_k|.
\end{equation}
The minimal $E_0$ for $p=1$ or $p=-1$ is the minimal eigenvalue of ${\cal B}$.

Finally, note that if $R=1$ and $m=2$, the minimal eigenvalue of ${\cal B}(\varphi_k^{(i)*})$ yields the minimal value of $I$ achievable within quantum theory, denoted $-\beta_Q$,
because the optimal quantum violation of Bell inequalities with $n$ parties and two dichotomic observables is obtained with qubits and traceless observables on a plane \cite{TonerVerstraete}. However, if $R>1$ or $m>2$ this result does not hold in general, as higher-dimensional systems and more general observables can produce more nonlocal correlations.

\section{The translationally invariant case}
\label{sec:ti}
In this section, 
we consider the case in which 
$\cal H$ in Eq. (\ref{eq:SpinHamiltonian}) is translationally invariant (TI).
In the spirit of Section \ref{sec:otherway}, the following procedure can be seen as an optimization of a TI Bell inequality with the same set of observables at each site.
We first present an algorithm to compute the classical bound of a TI Bell inequality with short range correlators which is exponentially faster in the number of parties than the one of Section \ref{subsec:classical} (Section \ref{subsec:FastDP}).
We then find the ground state energy of a TI Hamiltonian of the form of Eq. (\ref{eq:SpinHamiltonian}) analytically (Section \ref{subsec:AnalyticalTI}).

\subsection{Exponentially faster solution of the classical bound}
\label{subsec:FastDP}
In this section, we present a method that is exponentially faster in the number of particles than the one in Section \ref{subsec:classical} to compute the classical bound of TI Bell inequalities of the form of Eq. (\ref{eq:BellIneq});
i.e., where $\gamma_{\mathbf{k}}^{(i,r)}$ are independent of $i$. For the sake of simplicity, we first present this problem in a more abstract setting. Then we adapt it to TI Bell inequalities.

Consider a function $f^{(0)}: S \times S \longrightarrow \mathbbm{R}$ where $S$ is a finite set. We will describe how to compute
\begin{equation}
F:=\min_{x_0,\ldots, x_{w}} \sum_{j=0}^{w-1} f^{(0)}(x_j, x_{j+1})
\end{equation}
in $O(\log_2 w)$ steps. To this end, let us define
\begin{equation}
f^{(t+1)}(x,y) := \min_{z} (f^{(t)}(x,z)+f^{(t)}(z,y))
\label{eq:iterationstepf}
\end{equation}
for $t>0$. Note that the superscript $t$ indicates the iteration step. The idea is to successively rewrite $F$ in terms of $f^{(t+1)}$ instead of $f^{(t)}$ thus eliminating at each step approximately half of the variables in the minimization. For instance, by writing $F$ in terms of $f^{(1)}$ one has already carried out the minimization over all the variables with odd index (except the last one if $w$ is odd). Note that any function $f^{(t)}$ is defined by specifying $|S|^2$ numbers, where $|S|$ is the number of elements of $S$. Computing Eq. (\ref{eq:iterationstepf}) requires $O(|S|^3)$ operations. Note that when $w$ is a power of $2$, then $F$ is given by
\begin{equation}
F=\min_{x_0,x_w} f^{(\log_2 w)}(x_0,x_w).
\end{equation}
In the general case, however, we cannot assume $w$ to be a power of $2$. Nevertheless, every positive integer $w$ can be uniquely expressed as a sum of different powers of $2$. The idea is to apply the procedure described above to each of these powers of $2$ and then optimize over the remaining $O(\log_2 w)$ variables. To this end, recall that the binary expression for $w$ is
\begin{equation}
w=\sum_{i=0}^{\lfloor\log_2 w \rfloor} a_i 2^i= \sum_{j=0}^{|w|-1}2^{b_j},
\end{equation}
where $\left\lfloor \cdot \right\rfloor$ is the floor function, $a_i \in \{0,1\}$ correspond to the digits of $w$ in binary, $|w|:=\sum_{i} a_i$ is the Hamming weight of $w$, and the $b_j$'s enumerate the indices $i$ for which $a_i = 1$, sorted in decreasing order.
For instance, if $w=11$, $(a_3,a_2,a_1,a_0)=(1,0,1,1)$ and $(b_0,b_1,b_2)=(3,1,0)$, and if $w=15$, then $(a_3,a_2,a_1,a_0) = (1,1,1,1)$ and $(b_0,b_1,b_2,b_3)=(3,2,1,0)$.

Note that $F$ can now be rewritten as
\begin{equation}
F=\min_{y_{0}\ldots y_{|w|}} \sum_{j=0}^{|w|-1} f^{(b_j)} (y_j, y_{j+1}),
\label{eq:F:LastStep}
\end{equation}
which is a minimization over $1+|w|\leq 1+\lceil\log_2(w)\rceil$ variables $y_j$, where $\lceil \cdot \rceil$ is the ceiling function. Note that the $f^{(b_j)}$ need no longer be the same function for different $j$, so the expression for $F$ given in Eq. (\ref{eq:F:LastStep}) is not TI in general.
Note that at the $\lfloor \log_2 w\rfloor$-th step of the optimization we have the $x_0$ variable on the left (see Fig. \ref{fig:CampesinoRuso}) and several remaining variables $x_j$ for $j \geq \lfloor \log_2 w\rfloor$. Since these remaining variables highly depend on the binary expression of $w$, we denote them by $y_j$ (cf. Eq. (\ref{eq:F:LastStep})).
To minimize over $y_j$, we proceed in a similar fashion, now defining $g^{(0)}:=f^{(b_0)}$ and
\begin{equation}
g^{(t+1)}(x,y):=\min_z \left( g^{(t)}(x,z)+f^{(b_{t+1})}(z,y) \right)
\end{equation}
for $t>0$. Observe that now the optimization is linear, similar to the dynamic programming presented in Section \ref{subsec:classical}.
It follows that $F$ can now be computed as
\begin{equation}
F=\min_{x_0,x_{w}}g^{(|w|-1)}(x_0,x_w).
\end{equation}
In Figure \ref{fig:CampesinoRuso} we describe this procedure with an example.
\begin{center}
\begin{figure}
 \includegraphics[width=0.9\columnwidth]{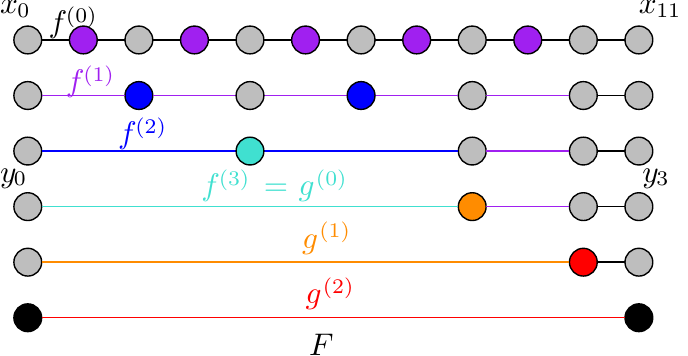}
\caption{Here we find $F$ for $w=11$. We represent the variables $x_i$ with circles and the functions $f$ or $g$ with lines. Each line with the same color corresponds to the same function. The variables that are not in gray are eliminated at the next step. At the $0$-th iteration, there are $10$ functions $f^{(0)}$ depending on $11$ variables. We compute $f^{(1)}$ and we substitute it as many times as possible, so that at the $1$-st iteration, we have eliminated the variables in purple. Then we eliminate the variables in blue by computing $f^{(2)}$ and the variables in turqouise by computing $f^{(3)}$. We then compute the functions $g$ by joining them with the remaining $f$'s, thus eliminating the orange and the red variables. Finally we minimize on the ends, where we can impose conditions on the boundary if needed.}
\label{fig:CampesinoRuso}
\end{figure}
\end{center}

To adapt this algorithm to the minimization of $I$, we start by noting that $I$ can be written as
\begin{equation}
I=\sum_{i=0}^{n-1} h(\mathbf{M}^{(i,R+1)}),
\end{equation}
where $h$ is defined as
\begin{equation}
h(\mathbf{M}^{(i,R+1)}):=\sum_{r=0}^R\sum_{\mathbf{k}=0}^{m^{r+1}-1}\gamma_\mathbf{k}^{(r)}M_{\mathbf{k}}^{(i,r)},
\end{equation}
and the indices of the parties are taken \textit{modulo} $n$ in the $M_{\mathbf{k}}^{(i,r)}$ defined in Eq. (\ref{eq:LHVDet}). Observe that every $i$-th and $(i+R+1)$-th parties share $R$ parties via $h$. In particular, by picking $i = j(R+1)-1$ we denote their local deterministic strategy as
\begin{equation}
x_j := \mathbf{M}^{(j(R+1), R)}.
\end{equation}
We now rewrite the optimization of $I$ over $\mathbf{M}$ in terms of $x_j$.
To this end, let us define
\begin{equation}
f^{(0)}(x_j, x_{j+1}):=\min_{M_k^{(jR+j+R)}} \sum_{i=0}^{R} h(\mathbf{M}^{(j(R+1)+i,R+1)}).
\end{equation}
Since we cannot assume that $n$ is a multiple of $R+1$, we take $w:=\lfloor n/(R+1)\rfloor$. Then,
\begin{equation}
\min_{\mathbf{M}} I = \widetilde{\min_{x_0, x_w}} g^{(|w|-1)}(x_0, x_w)+ T(x_w,x_0).
\end{equation}
where the tail $T(x_w, x_0)$
is defined as
\begin{equation}
T(x_w,x_0):=\sum_{i=(R+1)w}^{n-1} h(\mathbf{M}^{(i,R)}),
\end{equation}
with the indices of the parties taken \textit{modulo} $n$ and $\widetilde{\min}$ is the minimum taken on those $x_0, x_w$ that are compatible with PBC (for instance, if $w=n(R+1)$, then $T=0$ and $\widetilde{\min}$ is taken over $x_0=x_w$). In Fig. \ref{fig:DynProgTI} we illustrate the procedure we described with an example.

\begin{center}
\begin{figure}
 \includegraphics[width=0.9\columnwidth]{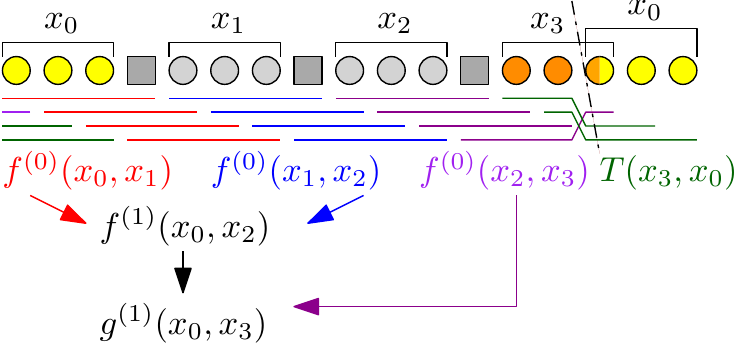}
\caption{An example with $n=14$ and $R=3$ (this corresponds to $w = 3$). Each circle or square corresponds to a party, starting at $i=0$ on the left. The dash-dotted line represents crossing the origin and each line below the parties represents a function $h$ (note that $h$ has a range of $R+1$ parties). The lines corresponding to the $h$'s are arranged in groups of $R+1$ (except for the $T$ corresponding to the tail), which we represent with the same color. Each full group contains a single gray square. By minimizing the local deterministic strategy at the square, we can define $f^{(0)}$ which depends on the local deterministic strategy chosen at the $R$ neighbours of each side. These groups of $R$ parties correspond to the $x_j$ in Fig. \ref{fig:CampesinoRuso}. They encode the possible $d^{mR}$ local deterministic strategies for each $x_j$.
Now, in $O(\log n)$ steps we find $g^{|w|-1}(x_0,x_w)$. Finally, we find the classical bound by minimizing the sum of $g^{|w|-1}$ and $T$ restricted to the $x_0$ and $x_{w}$ that have a compatible overlap.}
\label{fig:DynProgTI}
\end{figure}
\end{center}

\subsection{Analytical solution of the quantum value}
\label{subsec:AnalyticalTI}
Here we consider Eq. (\ref{eq:SpinHamiltonian}) in the TI case, which corresponds to $t^{(i)}$ and $t^{(i,r)}_{\alpha, \beta}$ being independent of $i$. In terms of Bell inequalities, this corresponds to the optimization of TI Bell inequalities with the same set observables being performed at each site.
We give analytically closed expressions in this case. As we prove in Appendix \ref{APP:QuantumTI}, the Williamson eigenvalues for a TI Bell operator of the form of Eq. (\ref{eq:BellOpGeneral}) are given by
\begin{equation}
 \varepsilon_{k,\pm}:=a_k + c_k \pm \sqrt{(a_k-c_k)^2+4(b_k^2+x_k^2)},
 \label{eq:WilliamsonTI}
\end{equation}
with $k$ ranging from $1$ to $\lfloor(n-(p-1)/2)/2 \rfloor$, 
where
\begin{eqnarray}
\label{eq:michelangelo}
 x_k&:=&H_{00;01}+\sum_{r=1}^R \cos \left(\upsilon_{k,r} \right)(H_{00;r1}-H_{01;r0})\\
 \label{eq:donatello}
 a_k&:=&-2\sum_{r=1}^R \sin \left(\upsilon_{k,r} \right)H_{00;r0}\\
 \label{eq:raphael}
 b_k&:=&-\sum_{r=1}^R \sin \left(\upsilon_{k,r} \right)(H_{00;r1}+H_{01;r0})\\
 \label{eq:leonardo}
 c_k&:=&-2\sum_{r=1}^R \sin \left(\upsilon_{k,r} \right)H_{01;r1},
\end{eqnarray}
with $\upsilon_{k,r}:= r\pi (2k - (p+1)/2)/n$.
Depending on the parity of $n$, the following eigenvalues also appear:
\begin{equation}
\varepsilon_{0,\pm} := \sum_{q=0}^{n-1} (\pm 1)^q H_{00;q1}.
\label{eq:estellico}
\end{equation}
If $p=-1$, then $\varepsilon_{0,+}$ always appears and $\varepsilon_{0,-}$ only appears if $n$ is even. If $p=1$, then $\varepsilon_{0,-}$ only appears if $n$ is odd and $\varepsilon_{0,+}$ does not appear (see Appendix \ref{APP:QuantumTI}).

The superselection rule (\ref{eq:superselectionrule}) to be fulfilled in this case reads
\begin{equation}
p = (-1)^{\left\lfloor\frac{n+(p-1)/2}{2}\right\rfloor}\prod_{k=0}^{n-1}s_k.
\label{eq:megahypersuperselectionrule}
\end{equation}
Note that if we just want to find the ground state energy of a TI fermionic Hamiltonian (\ref{eq:QuadraticAs}), then the matrices $A$ and $B$ are circulant, which means that the previous analysis can be done assuming that $p=-1$ and no superselection rule needs to be obeyed in this case, as Eq. (\ref{eq:megahypersuperselectionrule}) appears from the transformation of spins to fermions.
This result is applied to the example of Section \ref{ex:8}.

\section{Examples}
\label{sec:examples}

In this section we present three different examples in which we illustrate the tools we have presented through the paper. In Example \ref{ex:8} we optimize a tight TI inequality for $n=8$ parties with $R=2$ and PBC, showing that it has quantum violation when the same set of qubit measurements are performed at each site. Interestingly, such optimal measurements are ${\cal M}_0 = \sigma_x$, ${\cal M}_1 = \sigma_y$ and ${\cal M}_2 = \sigma_z$. In Example \ref{ex:FlipFlop} we construct a quasi TI Bell inequality for any even number of parties and any number of measurements, which depends on one parameter. We find its classical bound analytically with dynamic programming. The Bell operator corresponds to a spin XY model which we also solve analytically. Finally, in Example \ref{ex:SpinGlass} we show that the ground state of a spin glass is nonlocal in some parameter region.

\subsection{A translationally invariant Bell inequality for 8 parties}
\label{ex:8}
The general form of a translationally invariant Bell inequality $I + \beta_C \geq 0$ with $m=d=R=2$ is (cf. Eq. (\ref{eq:BellIneqGeneral}))
\begin{equation}
 I:= {\gamma} {\cal T}_2 + \sum_{k,l\in \{0,1\}}\left({\gamma}_{k,l} {\cal T}_{k,l} + \gamma_{k,2,l} {\cal T}_{k,2,l}\right),
 \label{eq:BIRis2}
\end{equation}
where the translationally invariant correlators $\cal T$ are defined as
\begin{equation}
 {\cal T}_{k_1, \ldots, k_r}:=\sum_{i=0}^{n-1} M_{(k_1,\ldots, k_r)}^{(i,r)}.
\end{equation}
In Appendix \ref{APP:TIIneq} we present a table with the optimal (tight) Bell inequalities of these kind for $n\leq 8$ and the quantum violation we can observe. Let us remark that finding all Bell inequalities for a given scenario is a computationally very expensive task and one typically manages to do it only for very small values of $n$, $m$, $d$ and $R$, even if symmetries are imposed \cite{TI50years}. When looking for Bell inequalities of the form (\ref{eq:BIRis2}), $n = 8$ was the maximum number of parties for which this task could be carried out in a reasonable time.
Here we present a particular case as an example.

If one takes the following coefficients: $\gamma = 0$, $\gamma_{00}=\gamma_{10}=-\gamma_{01}=-\gamma_{11}=2$, $-\gamma_{020}=-\gamma_{021}=\gamma_{120}=\gamma_{121}=1$, then the dynamic programming gives $\beta_C=32$ and the measurement settings ${\cal M}_0 = \sigma_x$, ${\cal M}_1 = \sigma_y$, ${\cal M}_2=\sigma_z$ produce a quantum violation of $QV =  \langle I\rangle  + \beta_C \simeq -0.2187$.

The latter is proven by applying Eq. (\ref{eq:WilliamsonTI}) to the example. More specifically, we observe that the chosen coefficients and measurements yield an $H$ matrix (cf. Eq. (\ref{eq:QuadraticHamiltonian}) and Appendix \ref{APP:QuantumTI}) with upper-diagonal blocks $h_1$ and $h_2$
\begin{equation}
h_1=\left(
\begin{array}{cc}
2&2\\2&2
\end{array}
\right), \qquad
h_2=\left(
\begin{array}{rr}
1&-1\\-1&1
\end{array}
\right),
\end{equation}
and the rest of the $h_r$ blocks are zero. This greatly simplifies the expression for $\varepsilon_k$ as $x_k = 0$ and $a_k=c_k$ imply $\varepsilon_k = 2 (a_k \pm |b_k|)$ (Note, that in the range of interest, $b_k \leq 0$ so that $\varepsilon_k = 2 (a_k \mp b_k)$).
If we take the plus sign, we have
\begin{equation}
\varepsilon_{k,+}=-8 \cos\left(\pi \frac{4k+3-p}{8}\right),
\end{equation}
and if we take the minus sign we obtain
\begin{equation}
\varepsilon_{k,-}=16 \cos\left(\pi \frac{4k+11-p}{16}\right).
\end{equation}

Thus, we can now calculate the ground state energy consistent with each $p$, which is given by
\begin{equation}
E_0 = -16-8\sqrt{2} \nonumber
\end{equation}
if $p=-1$, and
\begin{equation}
E_0 = -8\left(\sqrt{2}+2\cos(\pi/8) + 2 \sin (\pi/8)\right) \nonumber
\end{equation}
if $p=1$.
One does not need to check the superselection rule Eq. (\ref{eq:megahypersuperselectionrule}) for $p=-1$ as some of the $\varepsilon_k$ are zero. However, we do need to check it for $p=1$. There, we took an even ($2$) number of sign flips to the $\varepsilon_k$ and the determinant of $O$ is $1$ (cf. Appendix \ref{APP:QuantumTI}) . It follows that Eq. (\ref{eq:superselectionrule}) holds. The case for $p=-1$ gives $E_0 \simeq -27.3137$ whereas for $p=1$ it gives $E_0 = -32.2187$. Hence, $\langle I \rangle + \beta_C = -0.2187 < 0$, signalling the presence of nonlocality.

\subsection{A quasi translationally invariant Bell inequality}
\label{ex:FlipFlop}
Let us consider the chained Bell inequality \cite{ChainedInequality} between two parties labelled $A$ and $B$:
\begin{equation}
  I_{\mathrm{chain}}^{(A, B)} \geq -2(m-1),
 \label{eq:CHAIN}
\end{equation}
where $I_{\mathrm{chain}}^{(A,B)}$ is given by 
\begin{equation}
I_{\mathrm{chain}}^{(A,B)}:=\sum_{i=0}^{m-1}\left(A_{m-i-2}B_i+ A_{m-i-1}B_i\right),
\label{eq:def:CHAIN}
\end{equation}
where it is assumed that $A_{-1}:=-A_{m-1}$.  Note that the CHSH inequality \cite{CHSH} $I_{CHSH}^{(A, B)}:= A_0 B_0 + A_0 B_1 + A_1 B_0 - A_1 B_1$ is a particular case of Eq. (\ref{eq:def:CHAIN}) for $m=2$.
Inequality (\ref{eq:CHAIN}) is violated maximally with the following settings
\begin{equation}\label{Ai}
A_i=\sin(\phi_i) \sigma_x-\cos(\phi_i) \sigma_y
\end{equation}
and $B_i=A_i$, where the angles are given by
$\phi_i:=(i+1)\pi/m$, and with the state 
\begin{equation}
\ket{\psi_m} := \frac{1}{\sqrt{2}}\left(e^{-\frac{\mathbbm{i}\pi}{2m}}\ket{00} - \ket{11}\right),
\end{equation}
giving $\beta_Q=\langle I_{\mathrm{chain}}^{(A,B)}\rangle =-2m\cos(\pi/2m)$. Notice that 
the maximal violation relative to the classical bound is $\beta_Q^{r}:=\beta_Q/\beta_C=[m/(m-1)]\cos(\pi/2m)$.

The bipartite Bell operator corresponding to the chained Bell inequality with the above 
measurements can be written as
\begin{equation}
 {\cal B} = \alpha_m (\sigma_x\ot \sigma_x-\sigma_y\ot\sigma_y)+\beta_m(\sigma_x\ot\sigma_y+\sigma_y\ot\sigma_x)
 \label{eq:exemple2BCBellOp}
\end{equation}
where $\alpha_m:=m\cos^2(\pi/2m)$ and $\beta_m:=(m/2)\sin(\pi/m)$. By defining 
$\sigma_{\pi/2m}:=\cos(\pi/2m)\sigma_x+\sin(\pi/2m)\sigma_y$, this 
operator can be further re-expressed in a formally similar manner to the XY Hamiltonian as 
\begin{equation}
 {\cal B}= m \left(\sigma_{\pi/2m}^{(A)}\sigma_{\pi/2m}^{(B)} - \sigma_y^{(A)}\sigma_y^{(B)}\right).
\end{equation}

Let us now consider the following Hamiltonian:
\begin{equation}
 {\cal H}=m\sum_{i=0}^{2n-1}  f_i(\epsilon)\left(\sigma_{\pi/2m}^{(i)}\sigma_{\pi/2m}^{(i+1)} - \sigma_y^{(i)}\sigma_y^{(i+1)}\right),
 \label{eq:Hamiltonian:FlipFlop:BC}
\end{equation}
where the weights $f_i(\epsilon)$ alternate from even to odd sites as $f_i(\epsilon):=1+(-1)^i \epsilon$ and $\epsilon$ is an arbitrary real parameter.
We note that the Hamiltonian (\ref{eq:Hamiltonian:FlipFlop:BC}) is a particular case of the one-dimensional Bell inequality
\begin{equation}
 I_{\mathrm{chain}}^{(2n)}(\epsilon) := \sum_{i=0}^{2n-1} f_i(\epsilon) I_{\mathrm{chain}}^{(i,i+1)}
\end{equation}
when the same measurements (\ref{Ai}) are taken at each site.

Let us now determine the classical bound of 
$ I_{\mathrm{chain}}^{(2n)}(\epsilon)$. For even\footnote{For odd $n$ and $n < m$, the classical bound is slightly different. However, for the purposes of the present example, it is enough to consider the classical bound for even $n$.} $n$, the dynamic programming gives 
$\beta_C = 4n(m-1) \max\{1, |\epsilon|\}$. The explanation for this result is that, whenever $\varepsilon > 1$, it is better to use a classical strategy that would give 
$-2(m-1)$ on every $I_{\mathrm{chain}}^{(2i, 2i+1)}$ inequality and $2(m-1)$ on every $I_{\mathrm{chain}}^{(2i+1, 2(i+1))}$ inequality. An exemplary local strategy achieving this bound is given by
\begin{equation}
\begin{array}{c|cccc}
k&4i&4i+1&4i+2&4i+3\\
\hline
M_0^{(k)}&+&-&+&+\\
M_1^{(k)}&+&-&-&+\\
\vdots&\vdots&\vdots&\vdots&\vdots\\
M_{m-2}^{(k)}&+&-&-&+ \\
M_{m-1}^{(k)}  &+&+&-&- 
\end{array},\nonumber
\end{equation}
periodically repeated every $4$ sites.
On the other hand, if $0 \leq \epsilon \leq 1$, the optimal strategy consists in producing $-2(m-1)$ for every link; for instance, by picking 
$(M_0^{(2i)}, \ldots, M_{m-1}^{(2i)}) = (+,\ldots,+)$ and
$(M_0^{(2i+1)}, \ldots, M_{m-1}^{(2i+1)}) = (-,\ldots,-)$. 
The analysis for $\epsilon < 0$ is analogous.

It is worth highlighting two limiting cases:
\begin{itemize}
 \item $\epsilon = 1$. This corresponds to a sum of disjoint chained Bell inequalities between pairs $(2i, 2i+1)$.
 \begin{equation}
  I^{(2n)}_{\mathrm{chain}}(1)=2\sum_{i=0}^{n-1} I_{\mathrm{chain}}^{(2i, 2i+1)},
 \end{equation}
which is maximally violated by the state $\ket{\psi_m}_{AB}\otimes \ket{\psi_m}_{CD}\otimes \cdots$. 
The quantum value is then $\beta_Q = 2nm\cos(\pi/2m)$. Hence, there is a $O(1)$ violation relative to the classical 
bound that holds for every $n$:
\begin{equation}
 \beta_Q/\beta_C = \frac{m}{m-1}\cos\left(\frac{\pi}{2m}\right) > 1.
 \label{eq:RatioBCEps1}
\end{equation}

\item $\epsilon = 0$. This case corresponds to a sum of the chained Bell 
inequalities with the same weights between neighbours:
\begin{equation}
I^{(2n)}_{\mathrm{chain}}(0)=\sum_{i=0}^{2n-1}I^{(i,i+1)}_{\mathrm{chain}}.
\end{equation}
This inequality cannot be violated, as quantum correlations are monogamous with respect to the chained Bell inequality \cite{TonerVerstraete, RaviHorodecki}. Loosely speaking, if party $B$ violates $I_{\mathrm{chain}}$ with $A$, it cannot violate it simultaneously with $C$. For some types of monogamy relations, this result holds for various generalizations of the CHSH inequality to more measurements, outcomes and parties \cite{Monogamies}.
\end{itemize}
It is then clear that there is some critical value of $\epsilon$ for which correlations stop being nonlocal and one is able to simulate them locally.

Let us now notice that the $4n\times 4n$ matrix $H$ appearing in Eq. (\ref{eq:QuadraticHamiltonian}) and corresponding to the 
Hamiltonian (\ref{eq:Hamiltonian:FlipFlop:BC}) has the form $H=H_0 \otimes H_1$, where
\begin{equation}
H_0:=m\left(
\begin{array}{cccccc}
0&f_0&0&\cdots & 0 & p f_1\\
-f_0 & 0 & f_1 &0 & \cdots & 0\\
0 & -f_1 & 0 & f_0 & 0 &\\
&&\ddots & \ddots & \ddots &\\
0& \cdots & 0 & -f_1 & 0 &f_0\\
-p f_1 & 0 & \cdots & 0 & -f_0 & 0
\end{array}
\right),
\end{equation}
with $f_0 :=1+\epsilon$ and $f_1 = 1 - \epsilon$ for short, and 
\begin{eqnarray}
H_1&=&\left(
\begin{array}{cc}
\frac{1}{2}\sin(\frac{\pi}{m})& \cos^2(\frac{\pi}{2m})\\[1ex]
\cos^2(\frac{\pi}{2m}) &-\frac{1}{2}\sin(\frac{\pi}{m})
\end{array}
\right)\nonumber\\
&=&\alpha_m\sigma_x+\beta_m\sigma_z.
\end{eqnarray}
This is seen by applying the JW transformation to Eq. (\ref{eq:exemple2BCBellOp}).

In this case, to find the Williamson eigenvalues of $H$ it is sufficient to find those of $H_0$, which will appear with both signs, as $H_1$ has eigenvalues $\pm \cos(\pi/2m)$.

A similar analysis as the previous example shows that $H$ can also be block-diagonalized with a real DFT into $4\times 4$ blocks. The Williamson eigenvalues of $H_0$ are
\begin{equation}
\varepsilon_k = m \sqrt{2 \left[1 + \epsilon^2 + (\epsilon^2-1)\cos(\nu_{k})\right]}\quad 0 \leq k < n,
\end{equation}
where $\nu_k := \pi (2k+(p+1)/2)/n$. 

Hence, the quantum bound will be
\begin{equation}
\beta_Q(\epsilon) = 2\cos\left(\frac{\pi}{2m}\right)\sum_{k=0}^{n-1}\varepsilon_k.
\end{equation}
Although one should check the superselection rule Eq. (\ref{eq:superselectionrule}), it is not necessary for large $n$, as the difference between $\beta_Q$ for $p=1$ and $\beta_Q$ for $p=-1$ vanishes as $n$ grows.

Let us analyze the behaviour of $\beta_Q(\epsilon)$ in the thermodynamic limit. 
The contribution per particle to $\beta_Q(\epsilon)$, denoted $\widetilde{\beta}_Q(\epsilon)$ is
\begin{equation}
\widetilde{\beta}_Q(\epsilon):=2\cos\left(\frac{\pi}{2m}\right)\lim_{n\rightarrow \infty} 
\sum_{k=0}^{n-1}\frac{1}{2n}\varepsilon_k, \nonumber
\end{equation}
which is a Riemann sum, so it is by definition
\begin{eqnarray}
\widetilde{\beta}_Q(\epsilon) &=&\sqrt{2}m\cos\left(\frac{\pi}{2m}\right)\nonumber\\
&&\times \int_{0}^{1}\sqrt{1+\epsilon^2+(\epsilon^2-1)\cos(2\pi x)}\,\mathrm{d}x.
\end{eqnarray}
This can be expressed more compactly as
\begin{equation}
\widetilde{\beta}_Q(\epsilon) = \frac{4}{\pi}m\cos\left(\frac{\pi}{2m}\right) E(1-\epsilon^2),
\label{eq:EffectiveQuantum}
\end{equation}
where $E(t)$ is the complete elliptic integral of the second kind\footnote{The complete Elliptic integral of the second kind is defined as $$E(t):=\int_{0}^{\pi/2}\sqrt{1-t\sin^2(\theta)}\mathrm{d}\theta,$$ with the parameter $t$ obeying $0<t<1$.}.

In Figure \ref{fig:CHSH-chain} we can see such behavior compared to the contribution per particle to the classical bound, $\tilde{\beta}_C(\epsilon) = 2(m-1) \max \{1,|\epsilon|\}$.

\begin{center}
\begin{figure}
 \includegraphics[scale=.5]{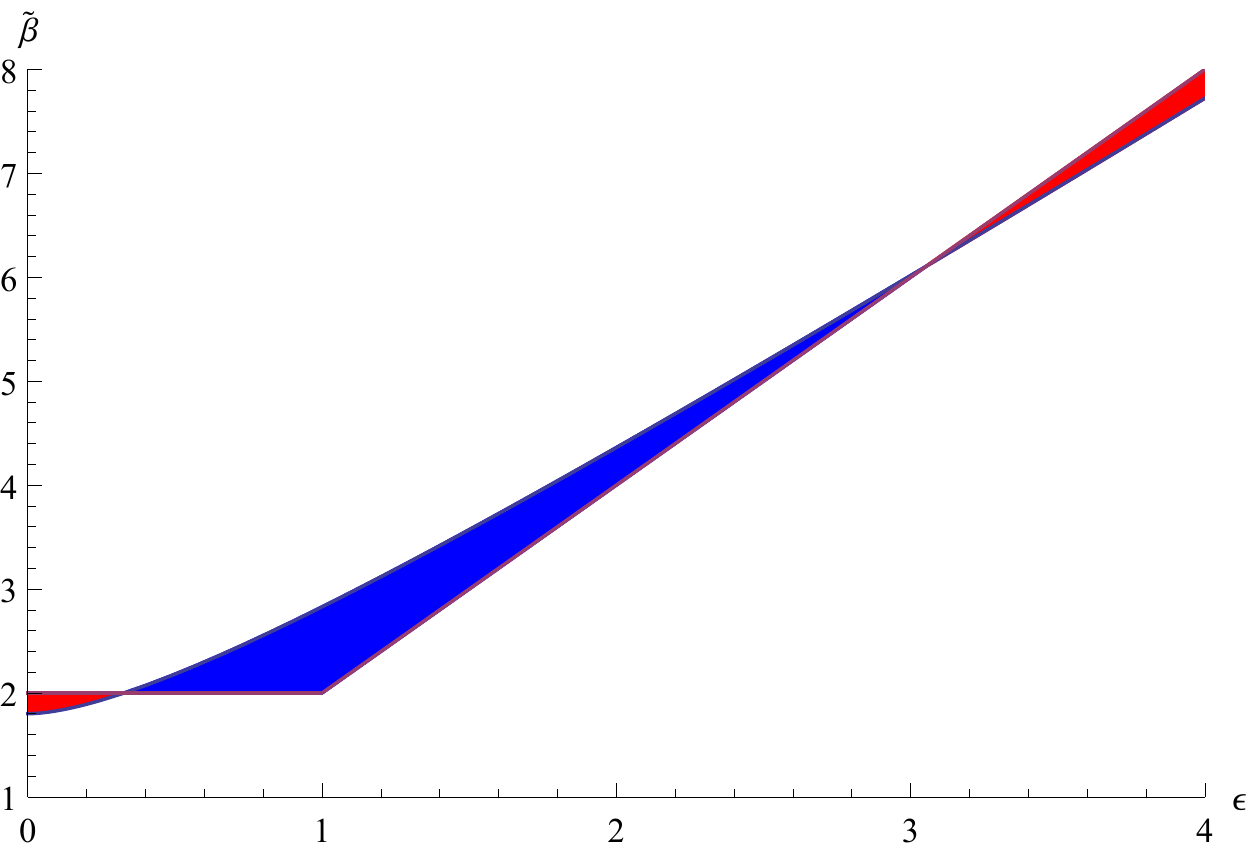}
\caption{For $m=2$, the curves $\tilde{\beta}_Q(\epsilon)$ (cf. Eq. (\ref{eq:EffectiveQuantum})) and $\tilde{\beta}_C(\epsilon)=2\max\{1,|\epsilon|\}$. These capture the behaviour of the nonlocality of the ground state of (\ref{eq:Hamiltonian:FlipFlop:BC}) in the limit of large $n$. Whenever $\tilde{\beta}_Q(\epsilon)$ is above $\tilde{\beta}_C(\epsilon)$ (blue region), nonlocality is detected. Otherwise (red region) a more stringent test is needed or the state admits a local hidden variable model. The intersection points between the two curves are $\epsilon_* \approx 0.327618$ and $\epsilon^* \approx 3.05234$.}
\label{fig:CHSH-chain}
\end{figure}
\end{center}

\begin{center}
\begin{figure}
 \includegraphics[scale=.5]{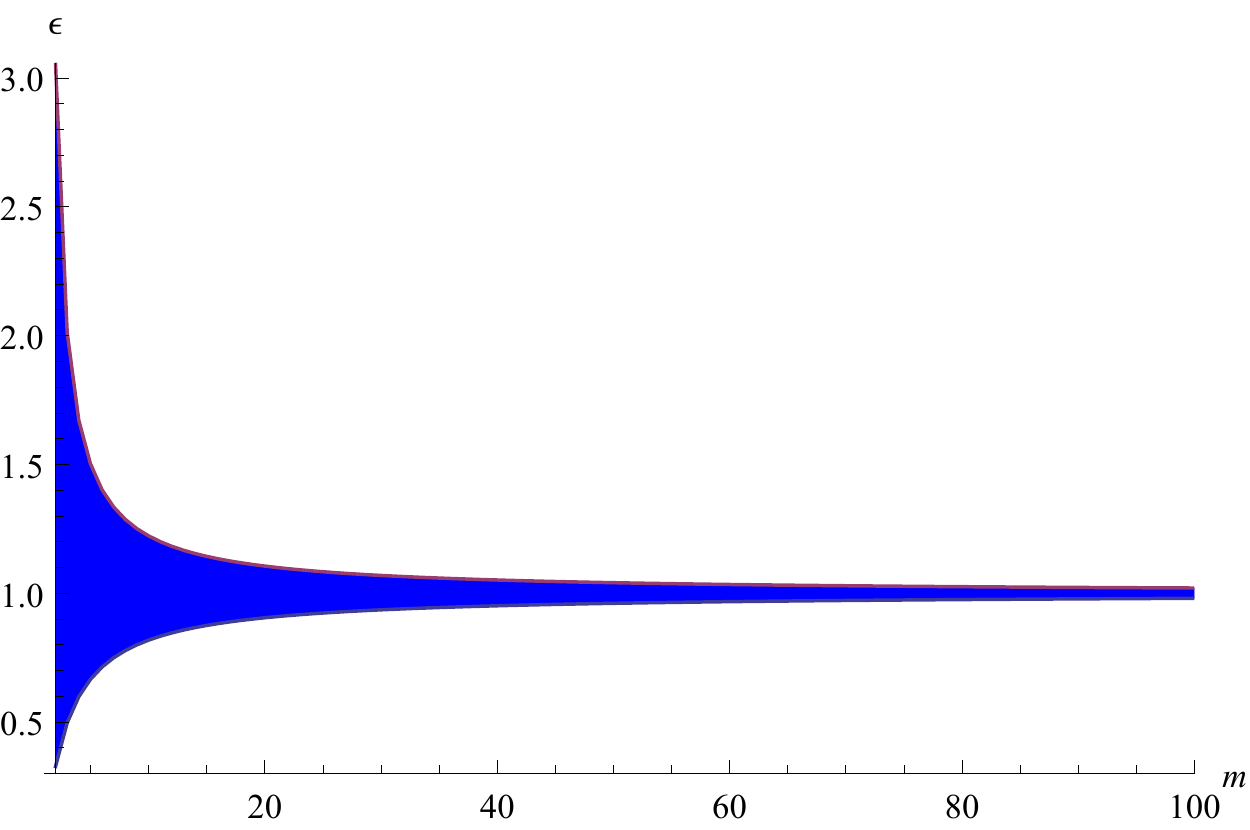}
\caption{For different values of $m$, the intervals $[\epsilon_*,\epsilon^*]$ for which we detect that the ground state of Eq. (\ref{eq:Hamiltonian:FlipFlop:BC}) is nonlocal. Observe that, for $m=2$, which corresponds to the case where the chained Bell inequality is the CHSH Bell inequality, we detect nonlocality in the largest parameter region (cf. Fig. \ref{fig:CHSH-chain}). For $\epsilon=1$ nonlocality is always detected, since $\tilde{\beta}_Q(1)/\tilde{\beta}_C(1) = m \cos (\pi/2m)/(m-1) > 1$ for all $m>1$, as it is shown in Eq. (\ref{eq:RatioBCEps1}).}
\label{fig:BC-chain}
\end{figure}
\end{center}

\subsection{A spin glass}
\label{ex:SpinGlass}
Finally we consider a Hamiltonian similar to Eq. (\ref{eq:Hamiltonian:FlipFlop:BC}) for $m=2$, where all the couplings are random, generated from a Gaussian probability distribution with mean $\mu$ and standard deviation $\sigma$, like a spin glass:
\begin{equation}
H=\sum_{i=0}^{n-1} J^{(i)}_{\mu, \sigma} \left(\sigma_{\pi/4}^{(i)} \sigma_{\pi/4}^{(i+1)}-\sigma_y^{(i)} \sigma_{y}^{(i+1)}\right).
\label{eq:SpinGlass}
\end{equation}

We can efficiently compute the ground state of Eq. (\ref{eq:SpinGlass}) and the classical bound of the CHSH-like Bell inequality associated to it with the methods we have presented, although numerically. We expect that, when $|\mu|/\sigma \gg 1$, we do not detect nonlocality, as the resulting inequality is close to the monogamous limit $\epsilon=0$ of Example \ref{ex:FlipFlop}. However, if $\sigma$ is big enough, one expects to have links with high weight surrounded with links with smaller weight; then, it may compensate to violate those links with higher weight.
In Figure \ref{fig:Spin-Glass} we show the result, for $n=10^2$ spins, averaged over $10^3$ realizations. There is clearly a transition for which the complexity of the ground state allows or does not allow for the statistics that one obtains when performing measurements on it to be simulable locally.

\begin{center}
\begin{figure}
 \includegraphics[scale=.5]{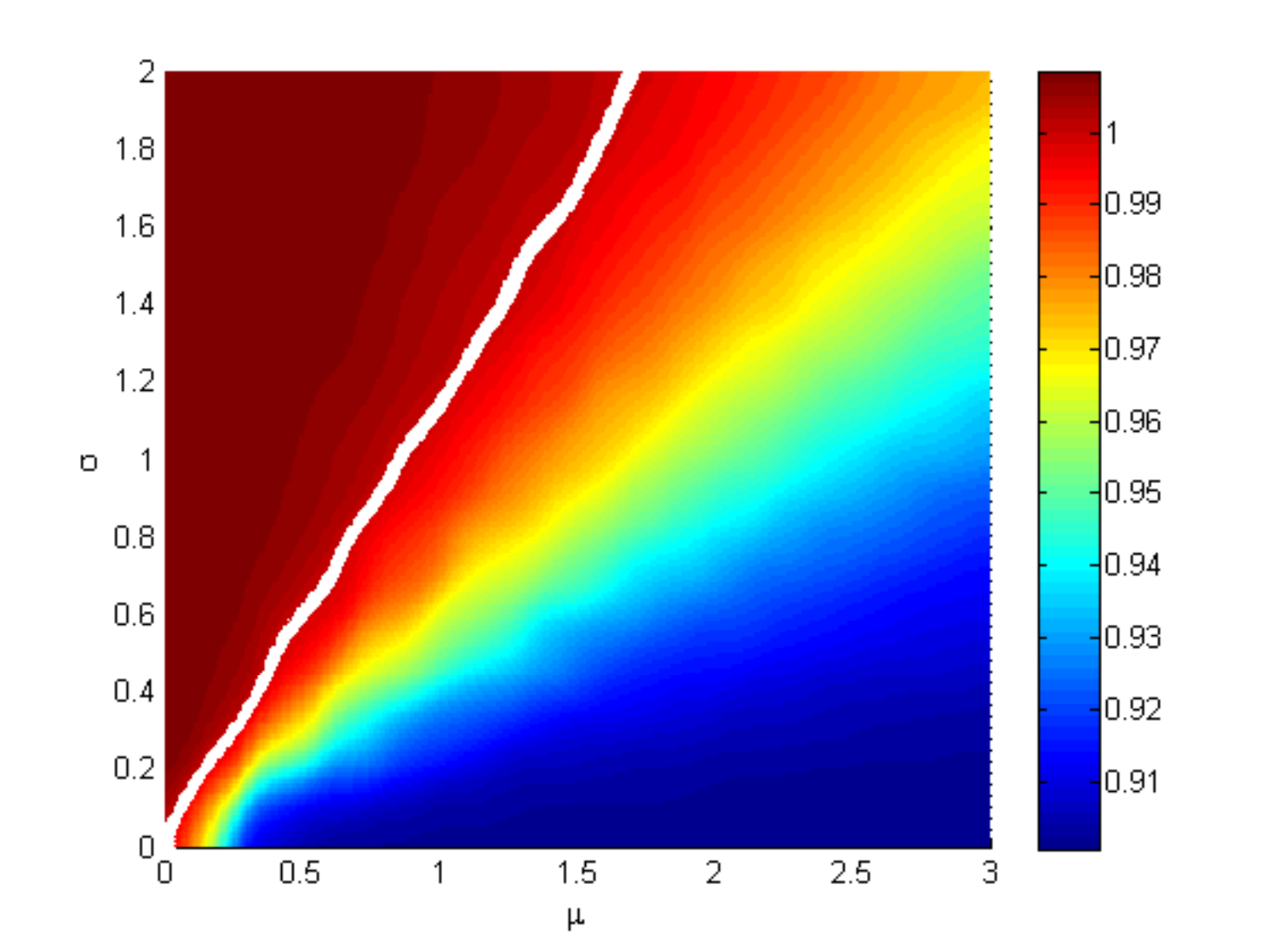}
\caption{Ratio of the quantum value over the classical bound of the ground state of Eq. (\ref{eq:SpinGlass}) and the Bell inequality we associate to it. The plot corresponds to a spin glass of $n=100$ spins with PBC, averaged over $1000$ realizations. The horizontal axis corresponds to the mean $\mu$ of the Gaussian distribution and the vertical axis corresponds to its standard deviation $\sigma$. If $\mu = 0$, the value is constant for all $\sigma > 0$, as both the expected value of the ground state and the classical bound grow at the same rate with $\sigma$. The white line represents the level curve for $\beta_C = \beta_Q$. The top-left part of the plot corresponds to the region of parameters in which we detect nonlocality ($\beta_C < \beta_Q$). Note that on the bottom-right region one finds values for which $\beta_Q/\beta_C < 1$ due to the fact that there is no simultaneous eigenvalue of $\sigma_{\pi/4}$ and $\sigma_y$ (the classical bound cannot be saturated using $\sigma_{\pi/4}$ and $\sigma_y$ as observables).}
\label{fig:Spin-Glass}
\end{figure}
\end{center}

\subsection{An XXZ-like spin model based on Gisin's Elegant Bell inequality}
\label{ex:GisinElegant}
In this last example, we present an XXZ-like Hamiltonian, which is not solvable via the JW transformation. We find its ground state energy numerically, using tensor networks and DMRG \cite{iTensor}. In this case we find a much richer structure than in Example \ref{ex:FlipFlop}. The Bell inequality that we associate to this Hamiltonian is a modification of Gisin's Elegant Bell inequality \cite{GisinElegant}. Gisin's original inequality is defined in a bipartite scenario with four dichotomic measurements with outcomes $\pm 1$ on Alice and three dichotomic measurements with outcomes $\pm 1$ on Bob:
\begin{equation}
I = \left(
  \begin{array}{cccc}
   A_0&A_1&A_2&A_3
  \end{array}
  \right) \left(
  \begin{array}{rrr}
   1&1&1\\
   1&-1&-1\\
   -1&1&-1\\
   -1&-1&1
  \end{array}\right) \left(
  \begin{array}{c}
   B_0\\B_1\\B_2
  \end{array}\right),
 \label{eq:GisinElegant}
\end{equation}
and it reads $|I| \leq 6$.
We observe that with the following observables:
 \begin{equation}
  A_0 = \frac{\sigma_x + \sigma_y + \sigma_z}{\sqrt{3}}, \quad A_1 = \frac{\sigma_x - \sigma_y - \sigma_z}{\sqrt{3}},\nonumber
 \end{equation}
  \begin{equation}
  A_2 = \frac{-\sigma_x + \sigma_y - \sigma_z}{\sqrt{3}}, \quad A_3 = \frac{-\sigma_x - \sigma_y + \sigma_z}{\sqrt{3}},\nonumber
 \end{equation}
 \begin{equation}
  B_0 = \sigma_x, \quad B_1 = \sigma_y, \quad B_2 = \sigma_z,
 \end{equation}
 the corresponding Bell operator becomes
 \begin{equation}
  {\cal B} = \frac{4}{\sqrt{3}}\left(\sigma_x \sigma_x + \sigma_y \sigma_y + \sigma_z \sigma_z\right),
 \end{equation}
 and its ground state is $\ket{\psi^-}=(\ket{01}-\ket{10})/\sqrt{2}$, yielding an expectation value $\bra{\psi^-}{\cal B}\ket{\psi^-} = -4\sqrt{3} \simeq -6.9282$.

 Let us now introduce the following modification, where $\Delta$ is a real parameter,
 \begin{equation}
  J_{\mathrm{even}} = \left(
  \begin{array}{cccc}
   A_0&A_1&A_2&A_3
  \end{array}
  \right) S_\Delta \left(
  \begin{array}{c}
   B_0\\B_1\\B_2
  \end{array}\right),
 \end{equation}
 \begin{equation}
  J_{\mathrm{odd}} = \left(
  \begin{array}{cccc}
   B_0&B_1&B_2
  \end{array}
  \right) S_\Delta^T \left(
  \begin{array}{c}
   A_0\\A_1\\A_2\\A_3
  \end{array}\right),
 \end{equation}
 where $S_\Delta$ is a $4\times 3$ matrix defined as
 \begin{equation}
  S_\Delta = \left(
  \begin{array}{rrr}
   1&1&\Delta\\
   1&-1&-\Delta\\
   -1&1&-\Delta\\
   -1&-1&\Delta
  \end{array}\right).
 \end{equation}
 The classical bound becomes
 \begin{equation}
  J_{\mathrm{even/odd}} \geq \left\{
  \begin{array}{rrr}
   4\Delta & \mbox{if} & \Delta \leq -2\\
   -4+2\Delta & \mbox{if} & -2 < \Delta \leq 0\\
   -4-2\Delta & \mbox{if} & 0 < \Delta \leq 2\\
   -4\Delta & \mbox{if} & 2 < \Delta
  \end{array}.
  \right.
 \end{equation}
 Now we have that the Bell operator has become
 \begin{equation}
  {\cal B} = \frac{4}{\sqrt{3}}\left(\sigma_x \sigma_x + \sigma_y \sigma_y + \Delta \sigma_z \sigma_z\right)
 \end{equation}
 in either case. Its ground state energy is
 \begin{equation}
  \bra{\psi_{\textrm{gnd}}}{\cal B}\ket{\psi_{\textrm{gnd}}} = \left\{
  \begin{array}{ccc}
   \frac{4\Delta}{\sqrt{3}} & \mbox{if} & \Delta \leq -1\\[2ex]
   \frac{-4(2+\Delta)}{\sqrt{3}} &\mbox{if} & \Delta > -1
  \end{array}.
  \right.
 \end{equation}
 The ground state is $\ket{\psi^-}$ if $\Delta > -1$ and it lies in the subspace spanned by $\ket{00}$ and $\ket{11}$ if $\Delta \leq -1$.

 We can now construct the many-body Bell inequality in a similar fashion as in Example \ref{ex:FlipFlop}:
 \begin{equation}
  J = \sum_{i=0}^{n/2-1} (1+\epsilon) J_{\mathrm{even}}^{(2i,2i+1)} + (1-\epsilon) J_{\mathrm{odd}}^{(2i+1,2i+2)}.
  \label{eq:IneqXXZ}
 \end{equation}
 Note that the Bell inequality $J$ has $4$ binary measurements with outcomes $\pm 1$ on the even sites and $3$ binary measurements with outcomes $\pm 1$ on the odd sites.

 The dynamic programming procedure yields the following classical bound in terms of $\epsilon$ and $\Delta$, which is a ${\cal C}^0$ piece-wise continous function. Due to all the cases that appear, and the complexity of the inequality, we omit the description of the local deterministic strategy. The regions are defined as follows (see Fig. \ref{fig:Regions}):
 \begin{figure}[h]
 \includegraphics[width=7cm]{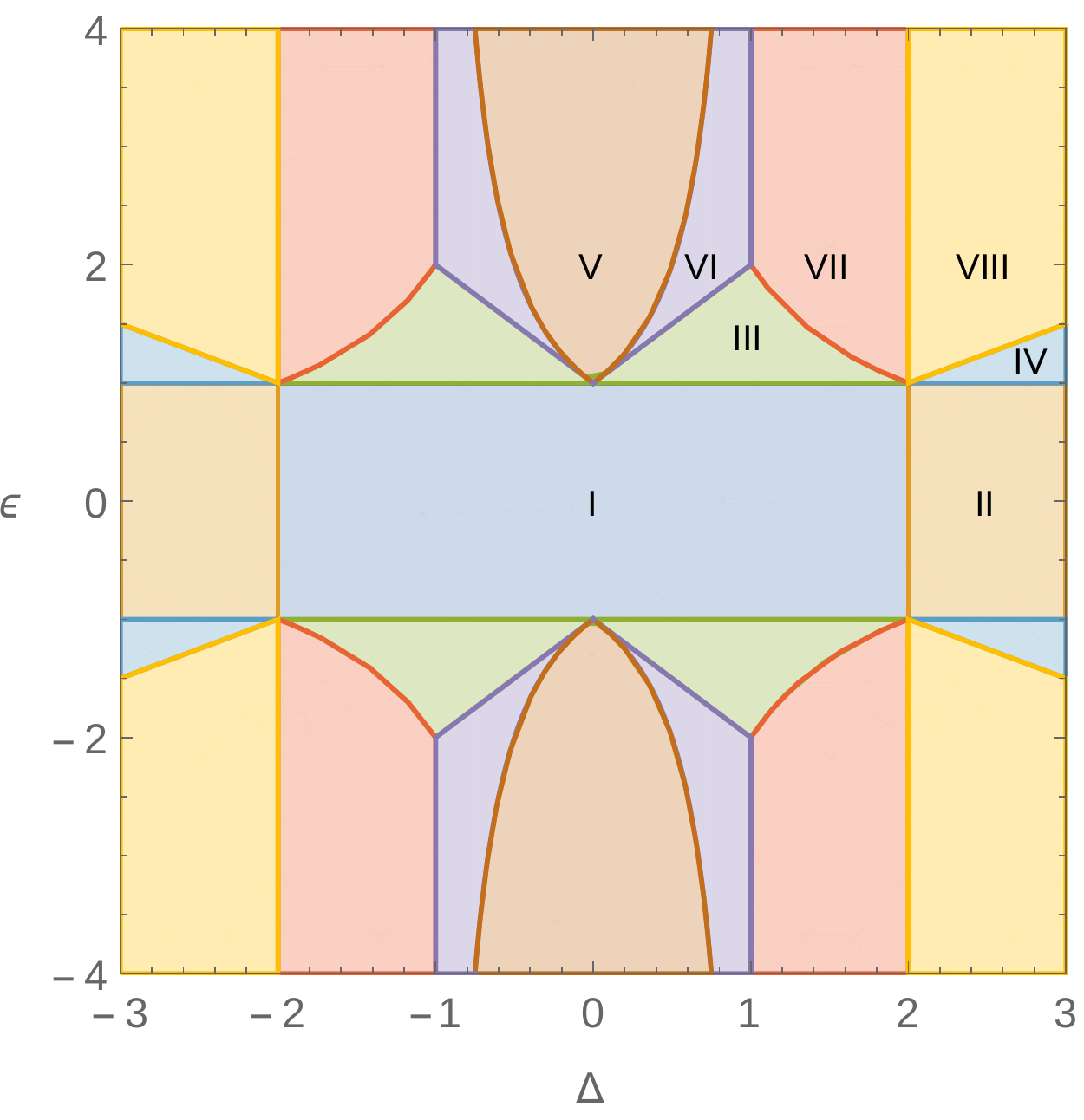}
 \caption{The classical bound of Eq. (\ref{eq:IneqXXZ}). Each region (cf. Eq. (\ref{eq:Regions})) has a different expression for the classical bound (cf. Eqs. (\ref{eq:XXZcbEven}, \ref{eq:XXZcbOdd})). The classical bound of Eq. (\ref{eq:IneqXXZ}) only depends on the absolute values of $\epsilon$ and $\Delta$.}
 \label{fig:Regions}
\end{figure}
  \begin{eqnarray}
  R_\mathrm{I}&=&\{(\Delta,\epsilon):\ |\Delta|\leq 2,\ |\epsilon| \leq 1\},\nonumber\\
  R_{\mathrm{II}}&=&\{(\Delta,\epsilon):\ |\Delta| > 2,\ |\epsilon| \leq 1\},\nonumber\\
  R_{\mathrm{III}}&=&\{(\Delta,\epsilon):\ |\Delta| \cdot |\epsilon| \leq 2,\ |\epsilon| \leq |\Delta| + 1, \ |\epsilon| > 1\},\nonumber\\
  R_{\mathrm{IV}}&=&\{(\Delta,\epsilon):\ |\epsilon| \leq  |\Delta|/2, \ |\epsilon| > 1\},\nonumber\\
  R_{\mathrm{V}}&=&\{(\Delta,\epsilon):\ |\epsilon| >  1/(1-|\Delta|)\},\nonumber\\
  R_{\mathrm{VI}}&=&\{(\Delta,\epsilon):\ |\epsilon| \leq  1/(1-|\Delta|),\ |\Delta|\leq 1, |\epsilon| > |\Delta| + 1\},\nonumber\\
  R_{\mathrm{VII}}&=&\{(\Delta,\epsilon):\ |\epsilon| \cdot |\Delta| > 2, 1 < |\Delta| \leq 2\},\nonumber\\
  R_{\mathrm{VIII}}&=&\{(\Delta,\epsilon):\ |\epsilon| > |\Delta|/2, |\Delta| > 2\}.
  \label{eq:Regions}
 \end{eqnarray}
 If $n \equiv 2 \mod 4$, $n > 2$, the classical bound is, on each region:
 \begin{eqnarray}
  \beta_{C,\mathrm{I}} &=& -n (4+2 |\Delta|),\nonumber\nonumber\\
  \beta_{C,\mathrm{II}} &=& -4  n  |\Delta|,\nonumber\nonumber\\
  \beta_{C,\mathrm{III}} &=& -8-4  |\Delta| - (4n-8)  |\epsilon| - (2n-4)  |\Delta|  |\epsilon|,\nonumber\\
  \beta_{C,\mathrm{IV}} &=& -8|\Delta| - (4n-8)|\epsilon||\Delta|,\nonumber\\
  \beta_{C,\mathrm{V}} &=& -4n|\epsilon| - (2n-8)|\epsilon||\Delta|,\nonumber\\
  \beta_{C,\mathrm{VI}} &=& -4 -(4n-4) |\epsilon| - (2n-4) |\epsilon| |\Delta|,\nonumber\\
  \beta_{C,\mathrm{VII}} &=& -4 |\Delta| - (4n-8) |\epsilon| - 2n|\epsilon||\Delta|,\nonumber\\
  \beta_{C,\mathrm{VIII}} &=& -8 |\epsilon| - 4 |\Delta| - (4n-8) |\epsilon| |\Delta|,
  \label{eq:XXZcbEven}
 \end{eqnarray}
whereas if $n \equiv 0 \mod 4$, the classical bound simplifies to
 \begin{eqnarray}
  \beta_{C,\mathrm{I}} &=& -n (4+2 |\Delta|),\nonumber\\
  \beta_{C,\mathrm{II}} &=& -4  n  |\Delta|,\nonumber\\
  \beta_{C,\mathrm{III}} = \beta_{C,\mathrm{V}} = \beta_{C,\mathrm{VI}} = \beta_{C,\mathrm{VII}}&=& -n |\epsilon| (4+2 |\Delta|),\nonumber\\
  \beta_{C,\mathrm{IV}} = \beta_{C,\mathrm{VIII}}&=& -4 n |\epsilon| |\Delta|.
  \label{eq:XXZcbOdd}
\end{eqnarray}

Using the $A_k$ measurements on the even sites and the $B_l$ ones on the odd sites, this yields the following XXZ-type Hamiltonian:
\begin{equation}
 {\cal H} = \sum_{i=0}^{n-1} \tilde{f}_i(\epsilon)\left(\sigma_x^{(i)}\sigma_x^{(i+1)}+\sigma_y^{(i)}\sigma_y^{(i+1)}+\Delta \sigma_z^{(i)}\sigma_z^{(i+1)}\right),
 \label{eq:XXZ}
\end{equation}
where $\tilde{f}_i(\epsilon) = 4 (1 + (-1)^i  \epsilon)/\sqrt{3}$.
The ground state of Eq. (\ref{eq:XXZ}) does not have an analytically closed form and has to be computed using DMRG. To do so, we have used the iTensor library \cite{iTensor}. The results are plotted in Fig. \ref{fig:DMRG}, where we observe quantum violations in a parameter region that does not seem to vanish as $n$ grows. We also observe a different behavior depending on the parity of $n/2$, which we attribute to some sort of frustration arising in the classical optimization, especially for low values of $n$.
\begin{figure}[h]
 \includegraphics[width=8cm]{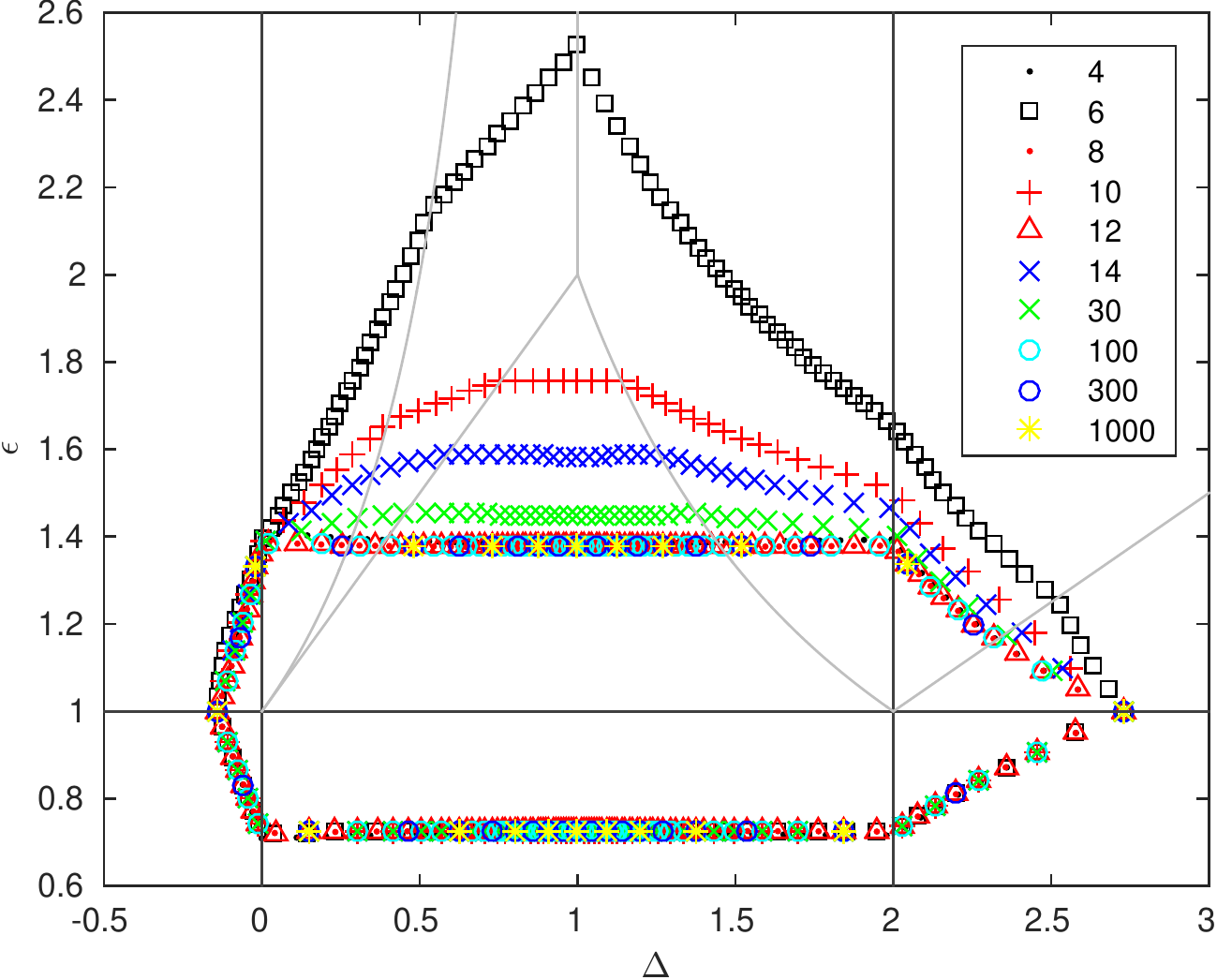}
 \caption{Level curves for which the quantum value equals the classical bound. Note the different behavior depending on the parity of $n/2$ and the effect of the different regions plotted in Fig. \ref{fig:Regions}, also sketched here in gray (for odd $n/2$) and black (even $n/2$) for clarity. The region for which there is quantum violation is the bounded set (\textit{i.e.}, containing the point $(\Delta, \epsilon) = (1,1)$). The plot is symmetric with respect to the line $\epsilon = 0$.}
 \label{fig:DMRG}
\end{figure}

\section{Conclusions and outlook}
\label{sec:conclusion}
In this work we have shown, on the one hand, that the ground states of some spin Hamiltonians are nonlocal. We have associated a Bell inequality to the Hamiltonian and we have computed its classical bound. We have exactly diagonalized the Hamiltonian and we have computed the quantum value of the corresponding Bell inequality. We have achieved this goal by combining two rather unexplored techniques in quantum information, which are dynamic programming and the Jordan-Wigner transformation.
This has allowed us to detect the presence of nonlocal correlations.
On the other hand, these tools have provided us a new way to determine the classical bound of certain classes of Bell inequalities, and look for their quantum violation under conveniently chosen observables. In the case of two dichotomic measurements, the optimization also yields the quantum bound (the maximal quantum violation) of the inequality. In the TI case, we have provided an algorithm to find the classical bound that is exponentially faster in the system size and we have obtained exact analytical solutions for the quantum value. Then, we have applied these techniques to several examples in which we reveal nonlocality: A tight TI Bell inequality with a TI Bell operator for $8$ parties, a quasi TI Bell inequality with a uniparametric quasi TI XY Hamiltonian and the ground state of a XY spin glass in some parameter region. We have also seen that for interacting models such as XXZ-like spin Hamiltonians, our method can be applied. There, we find the ground state energy numerically and we map the Hamiltonian to a modified version of Gisin's Elegant Bell inequality. These findings open the possibility to the implementation of multipartite quantum information protocols that use nonlocality as a key resource, by using ground states of Hamiltonians that appear naturally.

We remark that the Hamiltonians and the Bell inequalities we have studied have a finite interaction range, which in the TI case makes them particularly interesting from an experimentally-friendly perspective. Note that previous Bell inequalities for quantum many-body systems with low order correlators were specially designed for a permutationally invariant (PI) symmetry \cite{TASVLA, AnnPhys}; while being able to detect nonlocality in ground states of physical Hamiltonians such as the Lipkin-Meshkov-Glick \cite{LMG1, *LMG2, *LMG3} or a spin-squeezed Bose-Einstein condensate \cite{NonlocalityBEC}, the information accessible to these inequalities is bound to a de Finetti theorem \cite{deFinetti2007, deFinettiHarrow}, thus becoming more compatible with that produced by a separable state as the system grows \cite{AnnPhys}. This requires to increase the number of measurements with the system size in order to close the finite-statistics loophole \cite{NonlocalityBEC}.
However, many systems of interest are not PI, but TI, and we have studied the latter in this work. In this case, a de Finetti restriction does not apply, making the detection of nonlocality more robust to experimental imperfections.
Interestingly, to study the Bell inequalities proposed in \cite{TASVLA, AnnPhys}, in the classical and quantum many-body regime, one employs powerful mathematical tools (namely, convex hulls of semialgebraic sets \cite{ParriloSIAM} or the Schur-Weyl duality from representation theory \cite{GoodmanWallach}, respectively), which no longer apply when the PI symmetry is broken. Here we have used other mathematical tools (dynamic programming for the classical bound and the JW transformation for the quantum value) that allowed us to solve, even exactly, the two cases with this much weaker symmetry.

Let us finalize by pointing out possible future research directions that stem from our work. Throughout this paper, we have restricted ourselves to the study of nonlocality in one-dimensional spin short-ranged Hamiltonians, as we could compute their ground state energy with exact diagonalization. One can eliminate this restriction and study short-ranged spin Hamiltonians by using a Matrix Product State (MPS) description of the state, which is a good ansatz for these systems \cite{TensorNetworks}. The Bell inequalities that we would naturally associate to them would still be of the form of Eq. (\ref{eq:BellIneq}), because the interaction range $R$ would be fixed, so we could still efficiently find their classical bound. Conversely, we can eliminate the restriction on the subset of observables that we choose in Section \ref{sec:otherway} and the string of $\sigma_z$'s in the middle of the string operators, thus studying purely two-body correlator inequalities like the classes derived in \cite{TI50years}.
 In such cases, powerful numerical algorithms such as DMRG \cite{TNOrus} would be suitably tailored to perform the quantum optimization and check whether nonlocality is detected. Another interesting problem is related to the persistency of nonlocality \cite{PersistencyNL}. While the tools to carry this kind of analysis in the PI case have been put forward \cite{AnnPhys}, in the case of one spatial dimension we do not know yet how robust are the inequalities we have presented to particle losses.
One could also generalize our results towards other directions. Since Bell inequalities with more than two inputs or outputs per party are not, in general, maximally violated by measuring qubits \cite{CGLMP}, if one would like to increase the chance to detect nonlocality with these more general classes of Bell inequalities, increasing the physical dimension of the system would be a way to obtain better results. The tools presented here could be extended by using a generalized JW transformation \cite{ParafermionsHongHao} from qudits to parafermions, although the problem becomes algebraically more involved.

We have also seen through the examples presented that there is a strong relation between Hamiltonians of physical systems and Bell inequalities that we associate to them. Whereas we naturally establish this connection in the following direction: starting from the Hamiltonian of a physical system, we assign a Bell inequality to it, one can think of this relation in a more general way, since there are many Bell inequalities that, with the appropriate observables, realize the same physical Hamiltonian. Moreover, in Example \ref{ex:GisinElegant}, we have seen that this correspondence can be non-trivial, as the inequality does not even have the same number of measurements at every site. With the tools we have presented here, it is now possible that, given a physical one-dimensional Hamiltonian with short-range interactions, tailor the best Bell inequality that reveals the nonlocality of its ground state, as such Hamiltonian corresponds to the realization of a Bell operator of many different Bell inequalities, each with its own classical bound, which we can compute efficiently.

\section{Acknowledgments}

We acknowledge support of the EU Integrated Project SIQS, EU FET Proactive QUIC, Spanish MINECO (SEVERO OCHOA  programme for Centres of Excellence in R\&D SEV-2015-0522, National Plan Projects FOQUS  FIS2013-46768, FISICATEAMO FIS2016-79508-P and QIBEQI FIS2016-80773-P),  the  Generalitat  de  Catalunya  (SGR  874, SGR 875 and the CERCA programme),  Fundaci\'o Privada Cellex, ERC AdG OSYRIS and CoG QITBOX, the AXA Chair in Quantum Information Science and the John Templeton Foundation.
J. T. gratefully acknowledges a Max Planck - Prince of Asturias Award mobility grant and the CELLEX-ICFO-MPQ programme. G. D. L. C. acknowledges support from the Elise Richter fellowship of the FWF.
R. A. acknowledges funding from the European Union's Horizon 2020 research and innovation
programme under the Marie Sk\l{}odowska-Curie grant agreement No 705109.

\bibliography{biblio}

\appendix
\section{ \label{APP:parity} The parity operator}

\label{app:determinant}
Here we prove Eq. (\ref{eq:determinant}). Let $O \in {\cal O}(2n)$ be an orthogonal transformation relating the sets of Majorana fermions $\{\hat{c}_{i,\alpha}\}$ and $\{\hat{d}_{k,a}\}$, as in Eq. (\ref{eq:orthogonaltrafo}). Recall that the Cartan-Dieudonn\'e Theorem \cite{Gallier2011} states that every orthogonal transformation $O \in {\cal O}(2n)$ decomposes as a product of a number of reflections (at most, $2n$). Hence, it suffices to show that a reflection flips the parity operator; \textit{i.e.}, our aim is to show that
\begin{equation}
\prod_{i=0}^{n-1} \mathbbm{i}\hat{c}_{i,0}\hat{c}_{i,1}=-\prod_{i=0}^{n-1} \mathbbm{i}\hat{d}_{i,0}\hat{d}_{i,1},
\label{eq:app:flip}
\end{equation}
whenever $O$ is a reflection. Recall that any reflection with respect to a hyperplane with normal vector $\vec{u}$ can be written as $\mathbbm{1} - 2 \vec{u}\vec{u}^T$.
We will denote the LHS of Eq. (\ref{eq:app:flip}) $\hat{\mathbbm{P}}_c$ and the RHS of Eq. (\ref{eq:app:flip}) $-\hat{\mathbbm{P}}_d$.

The CARs (\ref{eq:CARsMajorana}) ensure that
\begin{equation}
[\mathbbm{i}\hat{c}_{i,0}\hat{c}_{i,1},\mathbbm{i}\hat{c}_{j,0}\hat{c}_{j,1}]=[\mathbbm{i}\hat{d}_{k,0}\hat{d}_{k,1},\mathbbm{i}\hat{d}_{l,0}\hat{d}_{l,1}]=0
\end{equation}
and
\begin{equation}
(\mathbbm{i}\hat{c}_{i,0}\hat{c}_{i,1})^2=(\mathbbm{i}\hat{d}_{k,0}\hat{d}_{k,1})^2=\hat{\mathbbm{1}}.
\end{equation}
Hence, the operators $\{\mathbbm{i}\hat{c}_{i,0}\hat{c}_{i,1}\}$ and $\{\mathbbm{i}\hat{d}_{k,0}\hat{d}_{k,1}\}$ respectively share an eigenbasis and split the Fock space into even and odd subspaces with respect to the parity operator $\hat{\mathbbm{P}}$ defined in Eq. (\ref{eq:def:P}).

We prove in Theorem \ref{thm:evenodd} that this partition of the Fock space does not change under orthogonal transformations. Moreover, if $\det O = -1$, the subspaces are just swapped. Theorem \ref{thm:evenodd} is supported on Lemma \ref{lem:vacuumflip}, which proves that reflections swap the eigenspaces, and Lemma \ref{lem:technical} contains the technical steps to prove Lemma \ref{lem:vacuumflip}. Hence, Eq. (\ref{eq:app:flip}) follows and, because every orthogonal transformation is a product of a number of reflections, we obtain Eq. (\ref{eq:determinant}).

\begin{lemma}
Consider a reflection $O=\mathbbm{1}-\vec{u}\vec{u}^T$, where $\vec{u}$ is a normalized vector.
Let us define the following quantities:
\begin{eqnarray}
S_0 &:=& u_{k,0}\sum_{i,\alpha}u_{i,\alpha}\hat{c}_{i,\alpha}\hat{c}_{k,1},\nonumber\\
S_1 &:=& u_{k,1}\sum_{i,\alpha}u_{i,\alpha}\hat{c}_{k,0}\hat{c}_{i,\alpha},\nonumber\\
S_{01}&:=&u_{k,0}u_{k,1}\sum_{i,\alpha,j,\beta}u_{i,\alpha}u_{j,\beta}\hat{c}_{i,\alpha}\hat{c}_{j,\beta}.
\end{eqnarray}
Then, the identities
\begin{equation}
S_{01} = u_{k,0}u_{k,1} \mathbbm{1}
\label{eq:app:S01}
\end{equation}
and
\begin{equation}
\mathbbm{i}\sum_{k=0}^{n-1}(S_0+S_1)\ket{\Omega} = \ket{\Omega} + 2\mathbbm{i}\sum_{k=0}^{n-1}u_{k,0}u_{k,1}\ket{\Omega}.
\label{eq:app:S0PlusS1}
\end{equation}
hold.
\label{lem:technical}
\end{lemma}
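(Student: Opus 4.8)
The plan is to treat the two identities separately, since \eqref{eq:app:S01} is purely algebraic (CARs only) while \eqref{eq:app:S0PlusS1} also uses the defining property $\hat c_{i,\alpha}\ket{\Omega}$. I would first establish \eqref{eq:app:S01}. Writing $S_{01}=u_{k,0}u_{k,1}\sum_{i,\alpha,j,\beta}u_{i,\alpha}u_{j,\beta}\,\hat c_{i,\alpha}\hat c_{j,\beta}$, I would split the double sum into the ``diagonal'' part $(i,\alpha)=(j,\beta)$ and the ``off-diagonal'' part. On the diagonal, $\hat c_{i,\alpha}^2=\hat{\mathbbm 1}$ by \eqref{eq:CARsMajorana}, so that contribution is $\big(\sum_{i,\alpha}u_{i,\alpha}^2\big)\hat{\mathbbm 1}=\hat{\mathbbm 1}$ because $\vec u$ is normalized. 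For the off-diagonal part, I would pair the term $(i,\alpha;j,\beta)$ with $(j,\beta;i,\alpha)$: their coefficients $u_{i,\alpha}u_{j,\beta}$ are equal, while $\hat c_{i,\alpha}\hat c_{j,\beta}+\hat c_{j,\beta}\hat c_{i,\alpha}=\{\hat c_{i,\alpha},\hat c_{j,\beta}\}=0$ for $(i,\alpha)\neq(j,\beta)$, so the off-diagonal sum vanishes. This gives $S_{01}=u_{k,0}u_{k,1}\hat{\mathbbm 1}$, which is \eqref{eq:app:S01}.

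For \eqref{eq:app:S0PlusS1} I would work inside each fixed-$k$ summand and act on $\ket{\Omega}$. Consider $(S_0+S_1)\ket{\Omega} = u_{k,0}\sum_{i,\alpha}u_{i,\alpha}\hat c_{i,\alpha}\hat c_{k,1}\ket{\Omega} + u_{k,1}\sum_{i,\alpha}u_{i,\alpha}\hat c_{k,0}\hat c_{i,\alpha}\ket{\Omega}$. In the first term I would pull out the $(i,\alpha)=(k,1)$ piece, where $\hat c_{k,1}\hat c_{k,1}=\hat{\mathbbm 1}$, leaving $u_{k,0}u_{k,1}\ket{\Omega}$ plus a remainder $u_{k,0}\sum_{(i,\alpha)\neq(k,1)}u_{i,\alpha}\hat c_{i,\alpha}\hat c_{k,1}\ket{\Omega}$. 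In the second term I would similarly isolate $(i,\alpha)=(k,0)$, giving $u_{k,1}u_{k,0}\ket{\Omega}$ plus $u_{k,1}\sum_{(i,\alpha)\neq(k,0)}u_{i,\alpha}\hat c_{k,0}\hat c_{i,\alpha}\ket{\Omega}$. Now for the two remainders I would use anticommutation to bring the operators into a common normal order, e.g.\ in the first remainder $\hat c_{i,\alpha}\hat c_{k,1}=-\hat c_{k,1}\hat c_{i,\alpha}$ (legitimate since $(i,\alpha)\neq(k,1)$), and recognize that the remaining two-operator strings acting on $\ket{\Omega}$ cancel between the two sums once the index $(k,0)$ is also accounted for --- the only surviving uncancelled contribution being the piece where the operator adjacent to $\ket{\Omega}$ is a creation operator paired against its own annihilation partner. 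The two ``diagonal'' pieces combine to $2u_{k,0}u_{k,1}\ket{\Omega}$, and summing over $k$ and multiplying by $\mathbbm i$ leaves $\ket{\Omega}$ from the $\sum_k(\cdots)$ that reconstitutes $\mathbbm i \hat{\mathbbm P}_c\ket{\Omega}$-type identity plus $2\mathbbm i\sum_k u_{k,0}u_{k,1}\ket{\Omega}$, which is exactly \eqref{eq:app:S0PlusS1}.

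The main obstacle is the bookkeeping in the second identity: keeping track of which $(i,\alpha)$ terms produce a nonzero action on $\ket{\Omega}$ and showing that everything except the $2u_{k,0}u_{k,1}$ diagonal term either annihilates the vacuum or cancels in pairs. A clean way to organize this is to decompose $\hat c_{i,\alpha}$ in each remainder into its creation and annihilation parts $\hat c_{i,0}=\hat a_i+\hat a_i^\dagger$, $\hat c_{i,1}=\mathbbm i(\hat a_i-\hat a_i^\dagger)$, use $\hat a_i\ket{\Omega}=0$, and let the CARs \eqref{eq:CARsDirac} collapse the rest; the normalization $\sum_{i,\alpha}u_{i,\alpha}^2=1$ never re-enters here, so no hidden use of it is needed. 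I would also double-check the overall factor of $\mathbbm i$ and signs coming from $\mathbbm i^\alpha$ in the definition of the Majorana operators, since those are the easiest place to drop a sign. Once the diagonal term is correctly identified as $2u_{k,0}u_{k,1}$ and everything else is shown to vanish, both identities follow and the lemma is proved.
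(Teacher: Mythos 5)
Your proof of \eqref{eq:app:S01} is correct and is exactly the paper's argument: the diagonal terms $(i,\alpha)=(j,\beta)$ give $u_{k,0}u_{k,1}\hat{\mathbbm{1}}$ via $\hat{c}_{i,\alpha}^2=\hat{\mathbbm{1}}$ and the normalization of $\vec{u}$, while the off-diagonal terms vanish pairwise through the anticommutator. Your strategy for \eqref{eq:app:S0PlusS1} also follows the paper's route (separate the $i=k$ contributions, kill the $i\neq k$ contributions on the vacuum via the CARs), but it contains one concrete error. You assert that the normalization $\sum_{i,\alpha}u_{i,\alpha}^2=1$ ``never re-enters'' in the second identity, so that ``no hidden use of it is needed.'' It does re-enter, and essentially: the standalone $\ket{\Omega}$ on the right-hand side of \eqref{eq:app:S0PlusS1} comes precisely from the surviving $i=k$ pieces of your two ``remainders,'' namely the $(i,\alpha)=(k,0)$ term of $S_0$ and the $(i,\alpha)=(k,1)$ term of $S_1$. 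These do \emph{not} cancel; they combine into $(u_{k,0}^2+u_{k,1}^2)\,\hat{c}_{k,0}\hat{c}_{k,1}$, and since $\mathbbm{i}\,\hat{c}_{k,0}\hat{c}_{k,1}\ket{\Omega}=\ket{\Omega}$, their total contribution is $\sum_{k}(u_{k,0}^2+u_{k,1}^2)\ket{\Omega}$, which equals $\ket{\Omega}$ \emph{only because} $\vec{u}$ is a unit vector. Relatedly, your account of where that $\ket{\Omega}$ comes from (``reconstitutes an $\mathbbm{i}\,\hat{\mathbbm{P}}_c\ket{\Omega}$-type identity'') is not right: $\hat{\mathbbm{P}}_c$ is a product, not a sum, and plays no role in this step.

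The part of your remainder analysis that genuinely cancels is only the $i\neq k$ sector. There each term produces a vector proportional to $\hat{a}_i^\dagger\hat{a}_k^\dagger\ket{\Omega}$, and after symmetrizing the coefficients over $i\leftrightarrow k$ one is left multiplying $\{\hat{a}_i^\dagger,\hat{a}_k^\dagger\}=0$, exactly as in the paper. Once you correct the bookkeeping of the $i=k$ sector (two terms giving $2u_{k,0}u_{k,1}\hat{\mathbbm{1}}$, two terms giving $(u_{k,0}^2+u_{k,1}^2)\hat{c}_{k,0}\hat{c}_{k,1}$, the latter requiring normalization to produce $\ket{\Omega}$) the proof closes and coincides with the one in the paper.
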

\begin{proof}
To prove Eq. (\ref{eq:app:S01}), we can split the sum into the indices for which $(i,\alpha)=(j,\beta)$ and the indices for which $(i,\alpha)\neq (j,\beta)$.

In the first case, since the CARs (\ref{eq:CARsMajorana}) imply that $(\hat{c}_{i,\alpha})^2 = \mathbbm{1}$ we have a term $u_{k,0}u_{k,1}\sum_{i,\alpha} u_{i,\alpha}^2(\hat{c}_{i,\alpha})^2$ which contributes $u_{k,0}u_{k,1} \mathbbm{1}$, because $\vec{u}$ is normalized.

In the second case, we note that we can rewrite the sum as
\begin{equation}
u_{k,0}u_{k,1}\sum_{(i,\alpha)<(j,\beta)}u_{i,\alpha}u_{j,\beta} \{\hat{c}_{i,\alpha},\hat{c}_{j,\beta}\} = 0,
\end{equation}
because of the CARs (\ref{eq:CARsMajorana}).

In order to prove Eq. (\ref{eq:app:S0PlusS1}), we begin by noting that
\begin{equation}
\hat{c}_{i,\alpha}\hat{c}_{k,\beta}\ket{\Omega}=\left\{
\begin{array}{lcc}
\mathbbm{i}^{\alpha + \beta}(-1)^\beta\ket{\Omega}&\mbox{ if }&i=k\\
\mathbbm{i}^{\alpha + \beta}(-1)^{\alpha + \beta}\hat{a}_i^\dagger\hat{a}_k^\dagger\ket{\Omega}&\mbox{ if }&i\neq k
\end{array}.
\label{eq:app:singleterm}
\right.
\end{equation}
We can now split the sum (\ref{eq:app:S0PlusS1}) into the parts where $i=k$ and $i\neq k$.

For the first part, we have a contribution in Eq. (\ref{eq:app:S0PlusS1}) that amounts to
\begin{equation}
\mathbbm{i}\sum_{k=0}^{n-1}\left[(u_{k,0}^2 + u_{k,1}^2)\hat{c}_{k,0}\hat{c}_{k,1} + 2 u_{k,0}u_{k,1}\mathbbm{1}\right]\ket{\Omega},\nonumber
\end{equation}
which, using Eq. (\ref{eq:app:singleterm}), simplifies to
\begin{equation}
\ket{\Omega} + 2\mathbbm{i}\sum_{k=0}^{n-1}u_{k,0}u_{k,1}\ket{\Omega}.
\end{equation}

For the second part, the contribution to Eq. (\ref{eq:app:S0PlusS1}) is
\begin{equation}
\mathbbm{i}\sum_{i\neq k}\sum_{\alpha}(u_{k,0}u_{i,\alpha}\mathbbm{i}^{\alpha+1}(-1)^{\alpha+1}-u_{k,1}u_{i,\alpha}\mathbbm{i}^{\alpha}(-1)^{\alpha})\hat{a}_i^\dagger\hat{a}_k^\dagger \ket{\Omega};\nonumber
\end{equation}
expanding the sum over $\alpha$ we have
\begin{equation}
\sum_{i\neq k}\left[\left(u_{k,0}u_{i,0}-u_{k,1}u_{i,1}\right) - \mathbbm{i}(u_{k,1}u_{i,0} + u_{k,0}u_{i,1})\right]\hat{a}_i^\dagger\hat{a}_k^\dagger \ket{\Omega}.\nonumber
\end{equation}
Splitting the sum between those indices for which $i < k$ and those for which $i > k$ we have that it can be rewritten into an expression involving the CARs (\ref{eq:CARsDirac}):
\begin{equation}
\sum_{i < k} \left[\left(u_{k,0}u_{i,0}-u_{k,1}u_{i,1}\right) -\mathbbm{i} \left(u_{k,1}u_{i,0} + u_{k,0}u_{i,1}\right)\right]\{a_i^\dagger, a_k^\dagger\}\ket{\Omega}.\nonumber
\end{equation}
Because $\{a_i^\dagger, a_k^\dagger\}=0$, this last expression is zero.
\end{proof}

\begin{lemma}
\label{lem:vacuumflip}
 Let $O$ be a reflection. Then,
 \begin{equation}
\sum_{k=0}^{n-1}\frac{\mathbbm{i}\hat{d}_{k,0}\hat{d}_{k,1}+\mathbbm{1}}{2}\ket{\Omega} = \left(\sum_{k=0}^{n-1}\frac{\mathbbm{i}\hat{c}_{k,0}\hat{c}_{k,1}+\mathbbm{1}}{2}-\mathbbm{1}\right)\ket{\Omega}.
\end{equation}
\end{lemma}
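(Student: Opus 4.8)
The plan is to substitute the explicit form of the reflection into the definition of the new Majorana operators, expand $\mathbbm{i}\hat{d}_{k,0}\hat{d}_{k,1}$, and recognise the resulting terms as the operators $S_0$, $S_1$, $S_{01}$ already treated in Lemma~\ref{lem:technical}; once this is set up the lemma is essentially immediate. First I would write $O=\mathbbm{1}-2\vec{u}\vec{u}^{\,T}$ with $\|\vec{u}\|=1$ (the general form of a reflection), so that Eq.~(\ref{eq:orthogonaltrafo}) gives $\hat{d}_{k,a}=\hat{c}_{k,a}-2u_{k,a}\hat{V}$ with $\hat{V}:=\sum_{i,\alpha}u_{i,\alpha}\hat{c}_{i,\alpha}$, a Hermitian operator obeying $\hat{V}^2=\mathbbm{1}$ (established in the course of proving Eq.~(\ref{eq:app:S01})). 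Expanding the product then gives, for each $k$,
\begin{equation}
\mathbbm{i}\,\hat{d}_{k,0}\hat{d}_{k,1}=\mathbbm{i}\,\hat{c}_{k,0}\hat{c}_{k,1}-2\mathbbm{i}(S_0+S_1)+4\mathbbm{i}\,u_{k,0}u_{k,1}\,\mathbbm{1},\nonumber
\end{equation}
where the cross term $-2u_{k,1}\hat{c}_{k,0}\hat{V}$ is exactly $-2S_1$, the cross term $-2u_{k,0}\hat{V}\hat{c}_{k,1}$ is exactly $-2S_0$ (the left/right placement of $\hat{V}$ matching the definitions verbatim), and the last term uses Eq.~(\ref{eq:app:S01}).

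Next I would sum over $k$ and apply the operator to $\ket{\Omega}$, invoking Eq.~(\ref{eq:app:S0PlusS1}): because $\mathbbm{i}\sum_k(S_0+S_1)\ket{\Omega}=\ket{\Omega}+2\mathbbm{i}\sum_k u_{k,0}u_{k,1}\ket{\Omega}$, the $u_{k,0}u_{k,1}$-dependent pieces cancel and what survives is
\begin{equation}
\sum_{k=0}^{n-1}\mathbbm{i}\,\hat{d}_{k,0}\hat{d}_{k,1}\ket{\Omega}=\sum_{k=0}^{n-1}\mathbbm{i}\,\hat{c}_{k,0}\hat{c}_{k,1}\ket{\Omega}-2\ket{\Omega}.\nonumber
\end{equation}
Dividing by $2$ and adding $\frac{n}{2}\ket{\Omega}$ to both sides recasts this as the claimed identity, since then the left-hand $\hat{d}$-sum equals the $\hat{c}$-sum minus $\mathbbm{1}\ket{\Omega}$; if desired, one can evaluate further using $\mathbbm{i}\hat{c}_{k,0}\hat{c}_{k,1}\ket{\Omega}=\ket{\Omega}$ so that both sides equal $(n-1)\ket{\Omega}$.

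I do not expect a genuine obstacle here, since Lemma~\ref{lem:technical} already carries the only substantial step; what remains is bookkeeping --- in particular tracking the factors of $2$ produced by writing a reflection as $\mathbbm{1}-2\vec{u}\vec{u}^{\,T}$, and keeping the operator ordering in the two cross terms consistent with the definitions of $S_0$ and $S_1$. The reason the statement is not a trivial ``$\hat{d}$-sum $=$ $\hat{c}$-sum'' lies entirely in Eq.~(\ref{eq:app:S0PlusS1}): the scalar $\ket{\Omega}$ that survives there, hence the stray $-\mathbbm{1}$, comes from the diagonal $i=k$ contributions, while the would-be two-fermion terms $\propto\hat{a}_i^{\dagger}\hat{a}_k^{\dagger}\ket{\Omega}$ cancel by the antisymmetry $\{\hat{a}_i^{\dagger},\hat{a}_k^{\dagger}\}=0$ against the $i\!\leftrightarrow\!k$-symmetric coefficients. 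That single flipped Majorana pair is the Fock-space counterpart of $\det O=-1$.
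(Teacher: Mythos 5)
Your proposal is correct and follows essentially the same route as the paper: expand $\hat{d}_{k,a}=\hat{c}_{k,a}-2u_{k,a}\hat{V}$ for the reflection, identify the cross and square terms with $S_0$, $S_1$, $S_{01}$, and invoke Lemma~\ref{lem:technical} to obtain $\sum_k\mathbbm{i}\hat{d}_{k,0}\hat{d}_{k,1}\ket{\Omega}=\sum_k\mathbbm{i}\hat{c}_{k,0}\hat{c}_{k,1}\ket{\Omega}-2\ket{\Omega}$, which is the claimed identity after the affine rescaling. The operator orderings in $S_0$ and $S_1$ and the factors of $2$ are all tracked correctly, and the closing sanity check that both sides equal $(n-1)\ket{\Omega}$ is consistent with Eq.~(\ref{eq:app:singleterm}).
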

\begin{proof}
 By hypothesis, $O_{i,\alpha; j,\beta}=\delta_{i,j}\delta_{\alpha,\beta}-2u_{i,\alpha}u_{j,\beta}$, where $\delta$ is the Kronecker delta function. Let us now see how the operator $\sum_{k=0}^{n-1}\mathbbm{i}\hat{d}_{k,0}\hat{d}_{k,1}$ relates to the $\{\hat{c}_{i,\alpha}\}$ Majorana fermions.

By definition (cf. Eq. (\ref{eq:orthogonaltrafo})), we can write
\begin{equation}
\mathbbm{i}\hat{d}_{k,0}\hat{d}_{k,1} = \mathbbm{i}\hat{c}_{k,0}\hat{c}_{k,1} - 2\mathbbm{i} (S_{0}+S_{1})+4\mathbbm{i}S_{01}.\nonumber
\end{equation}

Lemma \ref{lem:technical} allows us to conclude
\begin{equation}
\sum_{k=0}^{n-1}\mathbbm{i}\hat{d}_{k,0}\hat{d}_{k,1}\ket{\Omega} = \sum_{k=0}^{n-1}\mathbbm{i}\hat{c}_{k,0}\hat{c}_{k,1}\ket{\Omega} - 2 \ket{\Omega},
\end{equation}
showing that a reflexion flips the parity of the vacuum:
The occupation number operator, in terms of the $\{\hat{d}_{k,a}\}$ operators has opposite parity than the occupation number operator in terms of the $\{\hat{c}_{i,\alpha}\}$ operators:
\begin{equation}
\sum_{k=0}^{n-1}\frac{\mathbbm{i}\hat{d}_{k,0}\hat{d}_{k,1}+\mathbbm{1}}{2}\ket{\Omega} = \left(\sum_{k=0}^{n-1}\frac{\mathbbm{i}\hat{c}_{k,0}\hat{c}_{k,1}+\mathbbm{1}}{2}-\mathbbm{1}\right)\ket{\Omega}.
\end{equation}
\end{proof}

\begin{thm}
\label{thm:evenodd}
 Let $O \in {\cal O}(2n)$. Let ${\cal F}$ be the Fock space of $n$ Dirac Fermions. The operator $\hat{\mathbbm P}_c$ splits ${\cal F}$ into even and odd subspaces: ${\cal F}={\cal F}_e \oplus {\cal F}_o$. Similarly, $\hat{\mathbbm{P}}_d$ splits ${\cal F}$ as ${\cal F}={\cal F}_e'\oplus {\cal F}_{o}'$. Then, either ${\cal F}_e = {\cal F}_e'$ (if $\det O = 1$) or ${\cal F}_e = {\cal F}_o'$ (if $\det O = -1$).
\end{thm}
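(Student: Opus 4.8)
The plan is to upgrade the single-state fact of Lemma~\ref{lem:vacuumflip} (a reflection reverses the parity of the vacuum $\ket{\Omega}$) to a statement about the whole Fock space, and then to propagate it through a factorization of $O$ into reflections. Since $\hat{\mathbbm{P}}_c$ and $\hat{\mathbbm{P}}_d$ are Hermitian involutions (their blocks $\mathbbm{i}\hat{c}_{k,0}\hat{c}_{k,1}$, resp.\ $\mathbbm{i}\hat{d}_{k,0}\hat{d}_{k,1}$, mutually commute and square to $\hat{\mathbbm{1}}$), the pairs $(\mathcal{F}_e,\mathcal{F}_o)$ and $(\mathcal{F}_e',\mathcal{F}_o')$ are exactly their $\pm1$ eigenspaces, and the theorem is equivalent to the operator identity
\begin{equation}
\hat{\mathbbm{P}}_d=(\det O)\,\hat{\mathbbm{P}}_c ,
\label{eq:proofplan:target}
\end{equation}
which is nothing but Eq.~(\ref{eq:determinant}): $\det O=1$ gives $\mathcal{F}_e=\mathcal{F}_e'$ and $\det O=-1$ gives $\mathcal{F}_e=\mathcal{F}_o'$. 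So the theorem and Eq.~(\ref{eq:determinant}) are proved together.

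The first step is to show that $\hat{\mathbbm{P}}_c$ and $\hat{\mathbbm{P}}_d$ agree up to a global sign. Because $O$ is orthogonal, hence invertible, each $\hat{c}_{i,\alpha}$ is a real linear combination of the $\hat{d}_{k,a}$ and vice versa; consequently, using that $\hat{\mathbbm{P}}_c$ anticommutes with every $\hat{c}_{i,\alpha}$ (a standard consequence of the CARs~(\ref{eq:CARsMajorana})), $\hat{\mathbbm{P}}_c$ anticommutes with every $\hat{d}_{k,a}$ as well, and symmetrically $\hat{\mathbbm{P}}_d$ anticommutes with every $\hat{c}_{i,\alpha}$. Hence $\hat{\mathbbm{P}}_c\hat{\mathbbm{P}}_d$ commutes with every $\hat{c}_{i,\alpha}$; since the $\hat{c}_{i,\alpha}$ act irreducibly on $\mathcal{F}$ (they generate the full operator algebra of the $n$-mode Fock space), Schur's lemma forces $\hat{\mathbbm{P}}_c\hat{\mathbbm{P}}_d=\lambda\hat{\mathbbm{1}}$, with $\lambda\in\{+1,-1\}$ by Hermiticity, i.e.\ $\hat{\mathbbm{P}}_d=\lambda\,\hat{\mathbbm{P}}_c$. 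Evaluating this on $\ket{\Omega}$ and using $\hat{\mathbbm{P}}_c\ket{\Omega}=\ket{\Omega}$ identifies the scalar as $\lambda=\bra{\Omega}\hat{\mathbbm{P}}_d\ket{\Omega}$.

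It remains to show $\lambda=\det O$. By the Cartan-Dieudonn\'e theorem~\cite{Gallier2011}, write $O=O_1\cdots O_m$ with each $O_j$ a reflection, so that $\det O=(-1)^m$, and introduce the intermediate Majorana sets $\hat{c}^{(0)}:=\hat{c}$, with $\hat{c}^{(j)}$ obtained from $\hat{c}^{(j-1)}$ by the orthogonal transformation $O_j$ and $\hat{c}^{(m)}=\hat{d}$, together with their parity operators $\hat{\mathbbm{P}}^{(j)}$ and vacua $\ket{\Omega^{(j-1)}}$. The argument of the previous paragraph, applied at step $j$ to the pair $(\hat{c}^{(j-1)},\hat{c}^{(j)})$, gives $\hat{\mathbbm{P}}^{(j)}=\lambda_j\,\hat{\mathbbm{P}}^{(j-1)}$ with $\lambda_j=\bra{\Omega^{(j-1)}}\hat{\mathbbm{P}}^{(j)}\ket{\Omega^{(j-1)}}$; and Lemma~\ref{lem:vacuumflip}, whose proof uses only the CAR structure of the Majorana operators, applies verbatim with $\hat{c}^{(j-1)}$ playing the role of $\{\hat{c}_{i,\alpha}\}$, and states precisely that $\hat{\mathbbm{P}}^{(j)}\ket{\Omega^{(j-1)}}=-\ket{\Omega^{(j-1)}}$, so $\lambda_j=-1$. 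Telescoping, $\hat{\mathbbm{P}}_d=(-1)^m\,\hat{\mathbbm{P}}_c=(\det O)\,\hat{\mathbbm{P}}_c$, which is~(\ref{eq:proofplan:target}); reading off the $+1$ eigenspaces then gives the two cases of the theorem.

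The main obstacle I anticipate is the passage from $\ket{\Omega}$ to all of $\mathcal{F}$: Lemma~\ref{lem:vacuumflip} only controls the vacuum, so one must argue that the relation between $\hat{\mathbbm{P}}_c$ and $\hat{\mathbbm{P}}_d$ is an honest operator identity $\hat{\mathbbm{P}}_d=\pm\hat{\mathbbm{P}}_c$ (the irreducibility/Schur step above) and, for the chaining, check that Lemma~\ref{lem:vacuumflip} is genuinely a statement about an arbitrary set of Majorana operators obeying the CARs, not just about the specific $\{\hat{c}_{i,\alpha}\}$. Once these are in place, the telescoping and the final case distinction on $\det O$ are immediate.
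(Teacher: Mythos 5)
Your proof is correct, and it shares the paper's supporting ingredients (Lemma~\ref{lem:vacuumflip} for a single reflection, the Cartan--Dieudonn\'e factorization, and evaluation on the vacuum), but the central step --- upgrading the vacuum statement to the whole Fock space --- is argued along a genuinely different route. The paper expands each $\mathbbm{i}\hat{d}_{k,0}\hat{d}_{k,1}$ in terms of $\hat{a}_i,\hat{a}_i^\dagger$, observes that a quadratic operator changes the particle number by $-2$, $0$ or $2$ and hence that $\hat{\mathbbm{P}}_d$ preserves ${\cal F}_e$ and ${\cal F}_o$, and then uses linearity and the Fock basis (\ref{eq:def:fockbasis}) to reduce everything to the eigenvalue of $\hat{\mathbbm{P}}_d$ on $\ket{\Omega}$. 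You instead note that both parity operators anticommute with every Majorana operator of either family, so $\hat{\mathbbm{P}}_c\hat{\mathbbm{P}}_d$ commutes with the irreducibly acting Clifford generators and is a scalar by Schur's lemma. Your route is more algebraic and delivers in one stroke the operator identity $\hat{\mathbbm{P}}_d=\pm\hat{\mathbbm{P}}_c$, which is slightly stronger than the subspace invariance the paper first establishes; the paper's route is more elementary (no appeal to irreducibility) and makes visible \emph{why} parity is conserved. Two points worth tightening: (i) Lemma~\ref{lem:vacuumflip} literally asserts $\sum_k\tfrac{\mathbbm{i}\hat d_{k,0}\hat d_{k,1}+\mathbbm{1}}{2}\ket{\Omega}=(n-1)\ket{\Omega}$ rather than $\hat{\mathbbm{P}}_d\ket{\Omega}=-\ket{\Omega}$; passing from the sum to the product needs the (easy) observation that the $\mathbbm{i}\hat d_{k,0}\hat d_{k,1}$ are commuting involutions, so a summed eigenvalue of $n-1$ forces exactly one factor to equal $-1$ --- the paper makes the same silent identification. (ii) Your telescoping requires a vacuum $\ket{\Omega^{(j-1)}}$ for each intermediate Majorana family; this exists because any CAR-satisfying family can be paired into Dirac modes, and you correctly note that Lemma~\ref{lem:vacuumflip} uses only the CARs, so it applies verbatim at every step.
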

\begin{proof}
Recall that the expectation value of the operator $\hat{a}_i^\dagger \hat{a}_i$ is the occupation number of the $i$-th mode, denoted $n_i$. Then, a basis of the Fock space of $n$ fermionic Dirac modes can be defined as
\begin{equation}
 \ket{\mathbf{N}}:=\ket{n_0, \ldots, n_{n-1}} := \prod_{k=0}^{n-1}(\hat{a}_k^\dagger)^{n_k} \ket{\Omega},
 \label{eq:def:fockbasis}
\end{equation}
where the product is written from left to right. Note that $n_k$ can only be $0$ or $1$, due to (\ref{eq:CARsDirac}). The CARs (\ref{eq:CARsDirac}) further imply that $\hat{a}_i \ket{\mathbf{N}}=(-1)^{\sum_{j=0}^{i-1}n_j}\ket{\mathbf{N}'}$ whenever $n_i = 1$ ($\mathbf{N}'$ and $\mathbf{N}$ differ only in its $i$-th index) and $\hat{a}_i \ket{\mathbf{N}}=0$ whenever $n_i = 0$.

 Let $\ket{\psi}$ be an eigenvector of $\hat{\mathbbm{P}}_c$ of eigenvalue $1$. We can assume, without loss of generality, that $\ket{\psi}$ is a basis element $\ket{\mathbf{N}}$ of the Fock space, with $\sum_{i=0}^{n-1}n_i \equiv 0 \mod 2$.
 By noting that $\mathbbm{i}\hat{d}_{k,0}\hat{d}_{k,1}$ can be expressed as
 \begin{equation}
  \mathbbm{i}\sum_{i,\alpha,j,\beta}\mathbbm{i}^{\alpha + \beta}O_{i,\alpha;k,0}O_{j,\beta;k,1}(\hat{a}_i+(-1)^\alpha\hat{a}_i^\dagger)(\hat{a}_j+(-1)^\beta\hat{a}_j^\dagger),\nonumber
 \end{equation}
we see that the only operators which appear are products of $\hat{a}_i$ or $\hat{a}_i^\dagger$ with $\hat{a}_j$ or $\hat{a}_j^\dagger$. Hence, they either annihilate $\ket{\mathbf{N}}$ or they add $-2$, $0$ or $2$ to its particle number; always conserving its parity. This leaves us with a linear combination of vectors that live in the same subspace ${\cal F}_e$, so it remains on ${\cal F}_e$. The same argument applies to ${\cal F}_o$.

Hence, the subspaces ${\cal F}_e$ and ${\cal F}_o$ of the operator $\hat{\mathbbm{P}}_c$ are invariant under orthogonal transformations of the Majorana fermions. However, under $\hat{\mathbbm{P}}_d$, they might change its parity. Because of linearity and Eq. (\ref{eq:def:fockbasis}), it suffices to prove that $\ket{\Omega}$ is an eigenstate of $\hat{\mathbbm{P}}_d$, as it stems from Lemma \ref{lem:vacuumflip}. If its eigenvalue is $1$, then ${\cal F}_e = {\cal F}_e'$; if it is $-1$, then ${\cal F}_e={\cal F}_o'$.
\end{proof}

\section{\label{APP:PBC2OBC} The PBC to OBC reduction}
\label{app:PBC}
In this appendix we describe the reduction of the optimization problem of finding the classical bound in a Bell inequality with PBC to the optimization problem of finding the classical bound for a Bell inequality with OBC in one dimension.

For the sake of clarity, we first describe the procedure for inequalities with an arbitrary interaction range $R$, but without the chain of $M_m^{(j)}$ observables in the middle of the string operators (cf. Eq. (\ref{eq:BellIneqGeneral})), so that the inequalities strictly contain one or two-body correlators. We pick $R$ consecutive parties. Without loss of generality, we can assume them to be labelled from $0$ to $R-1$. To these parties we assign one of the $d^{m\cdot R}$ possible deterministic local strategies. Let $(i,j)$ be a pair of parties. 
There are three cases to consider:
\begin{itemize}
\item If $i$ and $j$ are between $0$ and $R-1$, any correlator between parties $i$ and $j$ has now a definite value.
\item If either $i$ or $j$, but not both are between $0$ and $R-1$, only one side of the correlator has a definite value. Because the classical bound is computed on a deterministic local strategy, the value of the correlator factorizes as the product of the local values assigned by the strategy we are considering. Hence, these correlators can be effectively moved outside of the interval $0 \ldots R-1$ by updating the one-body term of the party outside that interval.
\item In any other case, the correlator remains the same.
\end{itemize}
In Figure \ref{fig:PBC2OBC} a) and Figure \ref{fig:PBC2OBC} b) we illustrate this procedure for an example with $R=2$. The first case is illustrated in blue; the second is illustrated in red and the last one is illustrated in black.
\onecolumngrid
\begin{center}
\begin{figure}[h!]
 \includegraphics[scale=.8]{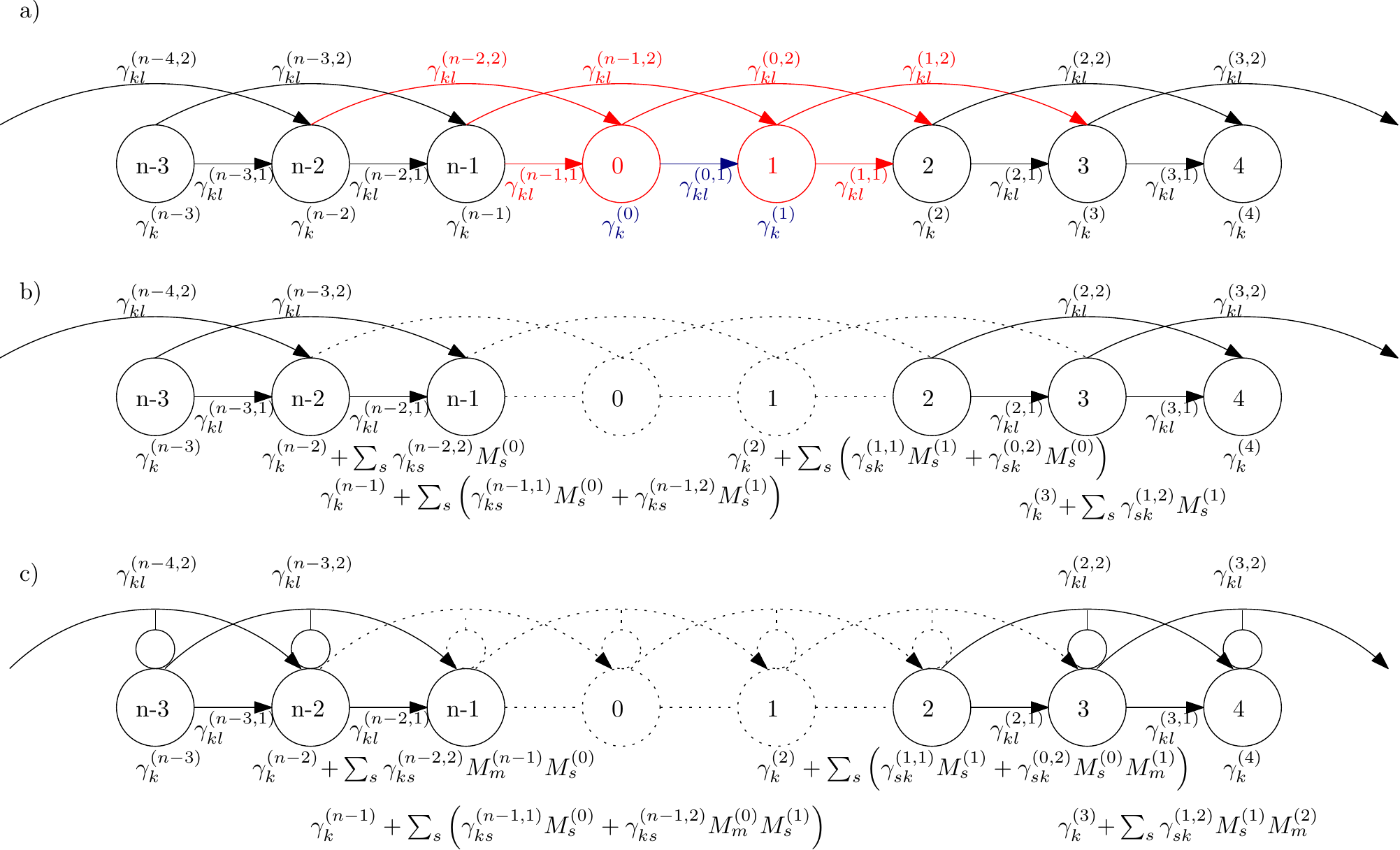}
\caption{In a), we have a two-body Bell inequality with $R=2$ and PBC. The coefficients below the circles correspond to the weights of the one-body correlators associated to those parties and the coefficients next to the arrows correspond to the weights of the correlators between the parties they join. In b), we consider the intermediate case in which there are no observables $M_m^{(i)}$ in the middle of the string operators, whereas in c) we include them (the big circles represent the set of observables labelled from $0$ to $m-1$ and the smaller circle represents the $m$-th one (cf. Eq. (\ref{eq:BellIneqGeneral}))).
In order to transform it to an OBC case, we choose a deterministic local strategy for the parties labelled by $0$ and $1$.
and we have chosen a deterministic local strategy for parties labelled by $0$ and $1$, represented in red in a). Thus, the values of the one-body correlators from $0$-th and $1$-st parties have a fixed value, as well as the two-body correlators between them. These are marked in blue. These terms contribution represents an offset on the classical bound. The two-body correlators starting at $n-2$ and ending at $0$ are updated to the one-body terms of the $(n-2)$-th party. The correlators starting at $n-1$ and ending either at $0$ or $1$ are updated to the one-body terms of the $(n-1)$-th party. Similarly, we update the one-body terms of parties $2$ and $3$. These terms are marked in red. The rest of the inequality remains untouched and such terms are marked in black. This process is the transformation from a) to b). Note that if the local deterministic strategy chosen in parties $0$ and $1$ is the optimal one, then the classical bound for the PBC problem is given by the offset generated by parties $0$ and $1$ (blue) plus the classical bound of the OBC problem between parties $2$ and $n-1$. Since there is a finite number of deterministic local strategies that parties $0$ and $1$ can have, eventually we find the optimal bound for the PBC problem. Finally, in c) we depict the same procedure, but some more assignments $M_m^{(i)}$ have to be fixed in advance to update the one-body terms accordingly.}
\label{fig:PBC2OBC}
\end{figure}
\end{center}
\twocolumngrid
Finally, if we have an inequality of the form of Eq. (\ref{eq:BellIneqGeneral}), the intermediate $M_m^{(j)}$ terms should also be taken into account when performing this procedure. In order to move to the $1$-body terms the $R$-body terms, now an extra number of $M_m^{(j)}$ observables should also be chosen in advance for those parties at distance $R-1$ to the set $\{0, \ldots, R-1\}$ (but only the $m$-th observable; it is not necessary to fix the rest). For instance, in the example of Figure \ref{fig:PBC2OBC} c), one would have to specify the value of $M_m^{(n-1)}$ and $M_m^{(2)}$. Observe that if $R>2$, then in the OBC problem, some of the $M_m^{(j)}$ values closest to the boundary become fixed ($2(R-2)$ of them). This has to be taken into account when performing the dynamical programming, as one should only explore those configurations compatible with the boundary conditions imposed by the PBC problem with the deterministic local strategy that we have chosen to make the reduction.

\section{\label{APP:QuantumTI} Quantum optimization of translationally invariant Bell inequalities}
In this appendix we discuss the quantum optimization for the translationally invariant (TI) case (\textit{i.e.}, when the Bell inequality is TI and the same set of measurements are performed at each site, thus leading to a TI Bell operator). We begin by observing that there are two cases to consider, depending on the choice of the parity $p$ of the fermion number: If $p=-1$, the matrix $H$ (cf. Eq. (\ref{eq:QuadraticHamiltonian})) is block-circulant: $H_{i,\alpha; j,\beta} = H_{i+1,\alpha; j+1,\beta}$, where the party indices are taken \textit{modulo} $n$. We can then define $h_r$ to be the $2\times 2$ block of $H$ whose entries are given by $(h_r)_{\alpha;\beta}:=H_{0,\alpha;r,\beta}$. If $p=1$, the matrix $H$ is no longer block circulant because the blocks of $H$ that correspond to interactions that cross the origin carry a minus sign (these blocks are located on the $R$ top-right diagonals and $R$ bottom-left diagonals of $H$). In this latter case, it is convenient to construct
\begin{equation}
 \tilde{H}:=\ket{-}\bra{-}\otimes H,
\end{equation}
where $\ket{-}:=(\ket{0}-\ket{1})/\sqrt{2}$, which is now circulant. Note that half of the spectrum of $\tilde{H}$ are zeroes and the other half coincides with the spectrum of $H$.
For instance, for the case of nearest neighbour interactions ($R=1$), $n=3$ parties and $p=1$, $H$ takes the block-form
\begin{equation}
H=\left(
\begin{array}{ccc}
h_0&h_1&h_1^T\\
-h_1^T&h_0&h_1\\
-h_1 & -h_1^T & h_0
\end{array}
\right).
\end{equation}
It is then clear that $\tilde{H}$ is block-circulant.

Circulant matrices can be diagonalized via a Discrete Fourier Transform (DFT). The DFT matrix is unitary in general, but we want to use an orthogonal transformation instead, so that we can transform Majorana fermions into Majorana fermions and obtain Eq. (\ref{eq:Williamson}). Let us consider the following Real DFT (RDFT) matrix of order $n$:
\begin{equation}
 ({\cal R}_n)_{kl}:=\sqrt{\frac{2}{n}}\cos\left(\frac{2\pi k l}{n} - \frac{\pi}{4}\right), \quad 0\leq k,l < n.
 \label{eq:def:RDFT}
\end{equation}
It is easy to see that ${\cal R}_n={\cal R}_n^T$ and ${\cal R}_n^2=\mathbbm{1}$, so ${\cal R}_n$ is orthogonal. In the following, we are going to use the fact that for any $n$, $\det({\cal R}_n\otimes {\mathbbm 1}_2)=(\det {\cal R}_n)^2=1$.

Let us now, with the aid of the orthogonal transformation ${\cal R}_n$, study which invariant subspaces $H$ acts upon.
If $p=-1$, then we compute $({\cal R}_n \otimes {\mathbbm{1}_2}) H ({\cal R}_n \otimes {\mathbbm{1}_2})$; if $p=1$ then we calculate $({\cal R}_{2n} \otimes {\mathbbm{1}_2}) \tilde{H} ({\cal R}_{2n} \otimes {\mathbbm{1}_2})$. Note that $\mathbbm{1}_2$ acts on the two Majorana modes asociated to one site.

On the one hand, if $p=-1$, a direct calculation of $H':=({\cal R}_n \otimes {\mathbbm{1}_2}) H ({\cal R}_n \otimes {\mathbbm{1}_2})$ shows that
\begin{equation}
 H'=\left(\bigoplus_{k=1}^{\lfloor (n-1)/2\rfloor} J_k\right) \oplus \left[\sum_{q=0}^{n-1} h_q\right] \oplus \left[\sum_{q=0}^{n-1} (-1)^qh_q\right],
 \label{eq:app:bigopluseven}
\end{equation}
where the last subspace only appears if $n$ is even and each $J_k$ is a $4\times 4$ block defined as
\begin{equation}
 J_k:=\sum_{q=0}^{n-1} \left(
 \begin{array}{cc}
  \cos(2\pi kq/n) & -\sin(2\pi kq/n)\\
  \sin(2\pi kq/n) & \cos(2\pi kq/n)
 \end{array}
 \right) \otimes h_q.
\end{equation}

On the other hand, if $p=1$, one similarly proves that $H'':=({\cal R}_{2n} \otimes {\mathbbm{1}_2}) \tilde{H} ({\cal R}_{2n} \otimes {\mathbbm{1}_2})$:
\begin{equation}
 H''=\left(\bigoplus_{k=1}^{\lfloor n/2\rfloor} J_k'\right) \bigoplus \left[\sum_{q=0}^{n-1} (-1)^qh_q\right],
 \label{eq:app:bigoplusodd}
\end{equation}
where the last subspace only appears if $n$ is odd, and each $J_k'$ is a $4\times 4$ block defined as
\begin{equation}
 J_k':=\sum_{q=0}^{n-1} \left(
 \begin{array}{cc}
  \cos\left(\pi q\frac{2k-1}{n}\right) & -\sin\left(\pi q\frac{2k-1}{n}\right)\\
  \sin\left(\pi q\frac{2k-1}{n}\right) & \cos\left(\pi q\frac{2k-1}{n}\right)
 \end{array}
 \right) \otimes h_q.
\end{equation}

We can further simplify these expressions by noting that $H=-H^T$ is block-wise equivalent to $h_q=-h_{n-q}^T$. This implies that the $2\times 2$ blocks in Eqs. (\ref{eq:app:bigopluseven}) and (\ref{eq:app:bigoplusodd}) are already brought to the Williamson form:
\begin{equation}
 \sum_{q=0}^{n-1}h_q=\sum_{q=0}^{n-1}\left(
 \begin{array}{cc}
  0&(h_q)_{0;1}\\-(h_q)_{0;1}&0
 \end{array}
 \right),
\end{equation}
and
\begin{equation}
 \sum_{q=0}^{n-1}(-1)^qh_q=\sum_{q=0}^{n-1}(-1)^q\left(
 \begin{array}{cc}
  0&(h_q)_{0;1}\\-(h_q)_{0;1}&0
 \end{array}
 \right).
\end{equation}

We then define the quantity
\begin{equation}
\varepsilon_{0,\pm} = \sum_{q=0}^{n-1}(\pm 1)^q (h_q)_{0;1},
\end{equation}
which corresponds to the Williamson eigenvalue(s) for the $2\times 2$ blocks.

To bring $H$ to a Williamson form, it remains to bring the $4\times 4$ blocks $J_k$ and $J_k'$ to a Williamson form. To this end, let us start by defining $\upsilon_{k,q}:= q\pi (2k - (p+1)/2)/n$ and 
\begin{eqnarray}
 x_k&:=&\sum_{q=0}^{n-1} \cos(\upsilon_{k,q}) (h_q)_{0;1},\\
 a_k&:=&- \sum_{q=0}^{n-1} \sin(\upsilon_{k,q}) (h_q)_{0;0},\\
 b_k&:=&- \sum_{q=0}^{n-1} \sin(\upsilon_{k,q}) (h_q)_{0;1},\\
 c_k&:=&- \sum_{q=0}^{n-1} \sin(\upsilon_{k,q}) (h_q)_{1;1}.
\end{eqnarray}
Let us notice that the blocks $J_k$ (or $J_k'$) take the following form:
\begin{equation}
\left(
 \begin{array}{cccc}
  0&x&a&b\\
  -x&0&b&c\\
  -a&-b&0&x\\
  -b&-c&-x&0
 \end{array}
 \right)_k.
 \label{eq:JkBlock4x4}
\end{equation}
Now we are ready to find an orthogonal transformation that brings $J_k$ or $J_k'$ to a Williamson form, which we state in the following lemma:

\begin{lemma}
For every $J_k$ (or $J_k'$) of the form (\ref{eq:JkBlock4x4}), there exists an orthogonal transformation $O_k$ that brings it to a Williamson form
\begin{equation}
 O_k^T J_k O_k = \left(
 \begin{array}{cccc}
  0&\varepsilon_{k,+}&0&0\\
  -\varepsilon_{k,+}&0&0&0\\
  0&0&0&\varepsilon_{k,-}\\
  0&0&-\varepsilon_{k,-}&0
 \end{array}
\right).
\label{eq:JkinWilliamsonForm}
\end{equation}
Its two Williamson eigenvalues $\varepsilon_{k,\pm}$, are given by
\begin{equation}
\varepsilon_{k,\pm} = a_k + c_k \pm \sqrt{\Delta_k},
\end{equation}
where $\Delta_k := (a_k-c_k)^2 + 4 (b_k^2+x_k^2)$ and $k$ ranges from $1$ to $\lfloor n/2 + (p-1)/4\rfloor$.
Furthermore, this orthogonal transformation always satisfies $\det O_k = -1$.
\end{lemma}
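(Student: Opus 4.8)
The plan is to first settle existence, which is immediate, and then reduce the lemma to two quantitative statements: the closed form of $\varepsilon_{k,\pm}$ and the sign $\det O_k=-1$. The matrix (\ref{eq:JkBlock4x4}) is a real antisymmetric $4\times4$ matrix, so the Williamson decomposition already invoked for the full Hamiltonian in Section~\ref{sec:quantum} (Ref.~\cite{WilliamsonNormalForm}) applies to it verbatim: there is an orthogonal $O_k$ bringing $J_k$ to a direct sum of two $2\times2$ antisymmetric blocks. Equivalently, the spectrum of $J_k$ is $\{\pm\mathbbm{i}\varepsilon_{k,+},\pm\mathbbm{i}\varepsilon_{k,-}\}$ for suitable reals $\varepsilon_{k,\pm}$ and $O_k$ is the orthonormal frame realising the real (block) Schur form; the relative placement and signs of the two blocks are a matter of convention, so once that convention is fixed only the formula for $\varepsilon_{k,\pm}$ and the value $\det O_k=-1$ remain to be proven.

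For the eigenvalues I would use that the characteristic polynomial of a $4\times4$ real antisymmetric matrix $A$ is $\lambda^{4}+p\,\lambda^{2}+q$ with $p=-\tfrac12\Tr(A^{2})=\tfrac12\sum_{i,j}A_{ij}^{2}$ and $q=\det A=\mathrm{Pf}(A)^{2}$. Reading off the entries of (\ref{eq:JkBlock4x4}) gives both coefficients in terms of $a_k,b_k,c_k,x_k$ at once, with $\mathrm{Pf}(J_k)=x_k^{2}+b_k^{2}-a_kc_k$; solving the resulting quadratic for the two block parameters and observing that its discriminant factors — $p^{2}-4q$ is proportional to $(a_k+c_k)^{2}\Delta_k$, which is precisely what collapses the nested radical — yields the stated $\varepsilon_{k,\pm}=a_k+c_k\pm\sqrt{\Delta_k}$. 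Equivalently, one diagonalises the $2\times2$ complex matrix $C+\mathbbm{i}S$ built from the cosine- and sine-weighted sums of the $h_q$, whose two eigenvalues carry the same information. This part is routine algebra.

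The only genuinely delicate point is $\det O_k=-1$, and the efficient tool for it is the transformation law of the Pfaffian under congruence, $\mathrm{Pf}(O_k^{T}J_kO_k)=\det(O_k)\,\mathrm{Pf}(J_k)$. The Pfaffian of the target normal form in (\ref{eq:JkinWilliamsonForm}) is a positive multiple of the product $\varepsilon_{k,+}\varepsilon_{k,-}$, which by the identity $\varepsilon_{k,+}\varepsilon_{k,-}=(a_k+c_k)^{2}-\Delta_k=-4(x_k^{2}+b_k^{2}-a_kc_k)$ has the \emph{opposite} sign to $\mathrm{Pf}(J_k)=x_k^{2}+b_k^{2}-a_kc_k$ whenever the latter is nonzero; since $\det O_k=\pm1$, this forces $\det O_k=-1$. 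I would then cover the degenerate spectra — $\mathrm{Pf}(J_k)=0$ (one of $\varepsilon_{k,\pm}$ vanishes) or $\varepsilon_{k,+}=\varepsilon_{k,-}$ — by continuity: the orthogonal frames realising (\ref{eq:JkinWilliamsonForm}) form a connected family on which $\det O_k$ is locally constant, and the non-degenerate $J_k$ are dense. A hands-on alternative I would keep in reserve is to construct $O_k$ explicitly as a short product of Givens rotations — a rotation in the $(1,3)$ and $(2,4)$ planes diagonalising the symmetric block (reducing to the case $a_k=c_k$), a rotation in the $(1,2)$ and $(3,4)$ planes, and the coordinate swap $2\leftrightarrow3$ that separates $J_k$ into its two $2\times2$ blocks — and then read $\det O_k=-1$ directly off the single odd transposition. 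The main obstacle is thus exactly this sign bookkeeping: confirming it is $-1$ rather than $+1$ with the ordering and sign conventions adopted for $\varepsilon_{k,\pm}$ and for the two blocks, and disposing of the degenerate cases cleanly; everything else is mechanical linear algebra.
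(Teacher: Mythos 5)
Your route is genuinely different from the paper's. The paper proves this lemma by brute force: it writes $O_k=LMR$ explicitly, with $L,R$ diagonal and $M$ a two\nobreakdash-parameter matrix, and obtains $\det O_k=-1$ by multiplying $\det L=-16/(b_k^2+x_k^2)$, $\det M=\Delta_k$ and $\det R=(b_k^2+x_k^2)/16\Delta_k$, leaving the verification of the normal form to ``matrix multiplication with convenient values of $\phi$ and $\theta$''. Your version --- existence from the general antisymmetric normal form, the eigenvalues from the characteristic polynomial $\lambda^4+p\lambda^2+q$ with $q=\mathrm{Pf}(J_k)^2$, and the sign of the determinant from $\mathrm{Pf}(O_k^TJ_kO_k)=\det(O_k)\,\mathrm{Pf}(J_k)$ --- explains \emph{why} the answer is $-1$ (it is the sign of $\varepsilon_{k,+}\varepsilon_{k,-}/\mathrm{Pf}(J_k)$) rather than exhibiting it, and your two key identities, $\mathrm{Pf}(J_k)=x_k^2+b_k^2-a_kc_k$ and $p^2-4q=(a_k+c_k)^2\Delta_k$, are both correct.

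There is, however, a concrete gap you wave away instead of confronting. Your own computation gives $\mu_{1,2}^2=\tfrac12\left(a_k^2+c_k^2+2b_k^2+2x_k^2\pm|a_k+c_k|\sqrt{\Delta_k}\right)=\left(\tfrac{a_k+c_k\pm\sqrt{\Delta_k}}{2}\right)^2$, so the Williamson eigenvalues of the block displayed in (\ref{eq:JkBlock4x4}) are $(a_k+c_k\pm\sqrt{\Delta_k})/2$, i.e.\ \emph{half} the stated $\varepsilon_{k,\pm}$; correspondingly $\varepsilon_{k,+}\varepsilon_{k,-}=-4\,\mathrm{Pf}(J_k)$, so the Pfaffian law literally yields $\det O_k=-4$, which is not ``the opposite sign, hence $-1$'' but a contradiction: no orthogonal congruence can realize (\ref{eq:JkinWilliamsonForm}) exactly as written, since it preserves Williamson eigenvalues. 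This is a normalization inconsistency of the lemma itself (one can cross-check against Example \ref{ex:8}, whose value $E_0=-8(\sqrt2+2\cos(\pi/8)+2\sin(\pi/8))$ is reproduced only with the halved eigenvalues), and your method is precisely the one that detects it --- but a correct write-up must say so and prove the corrected statement, with $\varepsilon_{k,\pm}/2$ in the target block, for which your ratio becomes $-1$ and the argument closes. Separately, your continuity/density treatment of $\mathrm{Pf}(J_k)=0$ does establish the \emph{existence} of a realizing $O_k$ with $\det O_k=-1$ (take limits in the compact group $O(4)$), but you should note that there the sign is no longer intrinsic: composing with the reflection that swaps the two coordinates of the null $2\times2$ block preserves the normal form and flips the determinant, so ``always satisfies $\det O_k=-1$'' can only mean ``can be so chosen''. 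The paper's explicit construction breaks down at the related locus $b_k^2+x_k^2=0$ and quietly sidesteps the issue by observing that the superselection rule is vacuous when some $\varepsilon_k=0$; you should do the same.
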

\begin{proof}
The choice of $O_k$ is not unique in general. Here we are going to construct $O_k$ as the product of three matrices. Since $k$ is fixed, we are going to skip explicitly stating the subindex throughout the proof. We construct $O$ as $O:=L M R$, where $L$ and $R$ are diagonal matrices whose entries are defined by:
\begin{equation}
L^{-1}=\frac{1}{4}\mathrm{Diag}\{a-c-\sqrt{\Delta},2,a-c+\sqrt{\Delta},2\},
\end{equation}
and
\begin{equation}
R^{-1}=\sqrt{\frac{2}{b^2+x^2}}\mathrm{Diag}\{r_-,r_-,r_+,r_+\},
\end{equation}
where $r_{\pm}:=\sqrt{\Delta\pm (a-c)\sqrt{\Delta}}$.
The matrix $M$ in the middle is not diagonal, and in its most general form, can depend on two real parameters, which we denote $\phi$ and $\theta$. We have that the entries of $M$ are given by
\begin{equation}
\left(
\begin{array}{rrrr}
f_{b,-x}(\phi)g_+&f_{-x,b}(\phi)g_+&f_{-b,x}(\theta)&f_{x,b}(\theta)\\
\cos \phi&-\sin \phi & \cos \theta & -\sin \theta\\
-f_{x,b}(\phi) & f_{-b,x}(\phi) & f_{x,b}(\theta)g_-&f_{b,-x}(\theta)g_-\\
\sin \phi&\cos \phi & \sin \theta & \cos \theta
\end{array}
\right),
\end{equation}
where $$f_{y,z}(\alpha):=y\cos \alpha + z \sin \alpha,$$ and $$g_\pm := (c-a\pm \sqrt{\Delta})^2/4(b^2+x^2).$$
We can now show that $O$ is indeed an orthogonal transformation with determinant $-1$, since $\det{L} = -16/(b^2+x^2)$, $\det{R} = (b^2+x^2)/16\Delta$ and the determinant of $M$ is independent of both $\varphi$ and $\theta$, and it is $\det M = \Delta$. The multiplication of these three terms gives the result $\det O = -1$.
Hence, one can pick convenient values for $\phi$ and $\theta$ in order to show that $O^T J O$ already has a Williamson form (\ref{eq:JkinWilliamsonForm}) simply by matrix multiplication.
\end{proof}

Tracking all the transformations we have made, \textit{i.e.}, counting the parity flips imposed by the choice of all the orthogonal transformations we have made, the superselection rule Eq. (\ref{eq:superselectionrule}) that must be obeyed takes the form of Eq. (\ref{eq:megahypersuperselectionrule}).

If, in addition, we impose a finite interaction range $R$, then we can further simplify the expressions for $x_k$, $a_k$, $b_k$ and $c_k$ thanks to the property $h_r= -h_{n-r}^T$ and arrive at Eqs. (\ref{eq:michelangelo}, \ref{eq:donatello}, \ref{eq:raphael}, \ref{eq:leonardo}).

\section{\label{APP:TIIneq} Tight translationally invariant inequalities}
\label{app:tables}
In this appendix we provide a list of tight optimal TI Bell inequalities (\ref{eq:BIRis2}) for $R=2$ that are violated when performing the same set of measurements on each party, which are of the form
\begin{eqnarray}
 {\mathcal M}_0 &=& \cos \varphi \sigma_x + \sin \varphi \sigma_y,\nonumber\\
 {\mathcal M}_1 &=& \cos \theta \sigma_x + \sin \theta \sigma_y,\nonumber\\
 {\mathcal M}_2 &=& \sigma_z.\nonumber
\end{eqnarray}

Note that, since
\begin{equation}
 \mathrm{Tr}({\cal B} \rho) = \mathrm{Tr}(({\cal U}{\cal B}{\cal U}^\dagger)({\cal U}{\rho}{\cal U}^\dagger)),
 \label{eq:TUBU}
\end{equation}
where ${\cal U}$ is a unitary transformation of the form $U^{\otimes n}$, the maximal quantum violation of a TI Bell inequality with the above measurements only depends on $\theta - \varphi$, as there always exists a unitary $U$ that brings ${\mathcal M}_0$ to $\sigma_x$ by performing a rotation in the $x-y$ plane \cite{AnnPhys}. Hence, there is no loss of generality in assuming $\varphi = 0$.

For $n\leq 8$ it is still computationally feasible to find all the facets of the polytope of local correlations projected onto the space of the correlators appearing in Eq. (\ref{eq:BIRis2}).
To achieve this goal, we construct all the vertices of the local polytope of correlations for $n$ parties, $3$ measurements and $2$ outcomes, which are $2^{3n}$ in total, and we project them to the space of translationally invariant correlators that appear in (\ref{eq:BIRis2}), following the same procedure of \cite{TI50years}. With linear programming, we can remove all the projected vertices that are a convex combination of other projected vertices. Then, we can use an algorithm such as CDD \cite{cdd} to compute the convex hull of the extremal projected vertices and obtain a minimal description of it in terms of facets. We call these facets \emph{tight} Bell inequalities. We summarize these findings in Table \ref{tab:facets}:
\begin{table}[h]
\begin{center}
\begin{tabular}{ c|r|r} 
  $n$&\mbox{Number of facets}&\mbox{Number of vertices}\\
  \hline
  3&166&72\\
  4&5628&204\\
  5&46804&1148\\
  6&20268&1816\\
  7&175444&6064\\
  8&29290&4044
\end{tabular}
\end{center}
\caption{Number of tight Bell inequalities (facets) and number of extremal points (vertices) of the local polytope projected onto the space of TI, $n$-party, up to $R$-range correlators (cf. Eq. (\ref{eq:BIRis2})).}
\label{tab:facets}
\end{table}

Following the same procedure as in Example \ref{ex:8}, we find the inequalities that are violated, which we classify in the Table \ref{tab:numbers}.

Note that by renaming the outcomes of the measurements, the labels of the measurements or the labels of the parties, one can obtain other inequalities that are not listed in the table; however, these relabellings do not change the properties we are interested in, such as the classical bound or its quantum violation, so we consider them to be equivalent and we include only one representative for each equivalence class.

From the values of Table \ref{tab:numbers}, we see that $QV/\beta_C$ approaches zero as $n$ grows. Similar to what was found in \cite{TI50years}, numerics suggest that, for translationally invariant Bell inequalities, there is a trade-off between $n$ and $R$. In \cite{TI50years}, the maximal $n$ for $R=1$ was $5$, whereas here for $R=2$ we did not find any violation beyond $n=8$. Furthermore, for $n=6, 7$ there are no translationally invariant Bell inequalities of the form (\ref{eq:BIRis2}) that are violated by performing the same qubit measurements at each site. One may still perform different qubit measurements on each site and be able to maximally violate the inequality, as it was proven in \cite{TonerVerstraete}; however to achieve the same violation with the same set of measurements at each site, one may then need to increase the dimension of the state and use POVMs \cite{TI50years}.

Interestingly, we also note that the highest ratio $QV/\beta_C$ is obtained for $n=3$. This is not surprising, as for $n=3$ our inequalities contain full-body correlators (see Table \ref{tab:numbers}). However, for $n=4$, the first class achieves a much higher violation than the others. This is because the inequality consists of a sum of CHSH-like inequalities between parties $0$ and $2$ and between parties $1$ and $3$:
\begin{equation}
I=\sum_{i=0}^3 M_{(0,2,0)}^{(i,2)} - 2 M_{(0,2,1)}^{(i,2)} - M_{(1,2,1)}^{(i,2)}.
\end{equation}
The minimum over quantum values that can be achieved is $\beta_Q = -4\cdot 2\sqrt{2}$, so that $|\beta_Q|/\beta_C = \sqrt{2}$. This inequality can be generalized to any even number of parties at the price of increasing the interaction range $R$. By picking $R=n/2$ one can always pair party $k$ with party $k+R$ into a CHSH-like link while maintaining $|\beta_Q|/\beta_C = \sqrt{2}$ for any even $n$.
However, let us remark that for the scope of this work, we are interested in studying the nonlocality of ground states of local Hamiltonians (\textit{i.e.}, with a fixed $R$).

\section{\label{APP:NumericalOrtho} The orthogonal transformation for Majorana fermions}
We consider an antisymmetric matrix $H$ of size $2n$. If we want to decompose it as $H=O J O^T$, where $O$ is a $2n \times 2n$ orthogonal matrix and $J$ has the form (\ref{eq:Williamson}), we can in the majority of situations, use the Spectral Theorem to find $O$: The matrix $H^2$ is symmetric, so it diagonalizes as $H^2 = O D O^T$, with $D$ a diagonal matrix with entries $-\varepsilon_k^2$, appearing with multiplicity $2$ for each $k$, and the columns of $O$ forming an orthonormal basis. If all the $\varepsilon_k$ are different, then one can safely conclude that $H= O  J O^T$, because if $H$ is antisymmetric, then $J$ has to be of the form (\ref{eq:WilliamsonTI}). Hence, one can say that $O$ is unique (up to permutations that determine the order and the signs of $\varepsilon_k$).

However, if $\varepsilon_k$ has a multiplicity greater than $1$, this need no longer be the case, as the $O$ found via the Spectral Theorem is no longer unique (one can perform an arbitrary orthogonal transformation in each eigenspace). Note that this pathologic case is of interest to our problem, as for tight Bell inequalities with the optimal set of measurements, it is common to find $\varepsilon_k$'s with the same value (for instance, in Example \ref{ex:8}). There are two ways to circumvent this problem: one is to add some noise to $H$ such that all the $\varepsilon_k$'s can be considered different; however we lose precision in the solution and add numerical instability. The other way is described below.

Let $\varepsilon$ be a nonzero Williamson eigenvalue of $H$ with multiplicity $m$. Then, $O^T H O$ has a $2m \times 2m$ block, denoted $J_\varepsilon$, that satisfies
\begin{equation}
J_\varepsilon^2 = -\varepsilon^2 \mathbbm{1}_{2m}.
\label{eq:JBlockMultiplicity}
\end{equation}
Thus, any orthogonal transformation acting on $J_\varepsilon$ leaves $J_\varepsilon^2$ invariant, but not necessarily $J_\varepsilon$ in the form (\ref{eq:Williamson}). Let $\ket{e_1}$ be a unit vector in the $J_\varepsilon$ eigenspace (picked from the corresponding columns of $O$) and let $\ket{e_2}:=J_\varepsilon \ket{e_1}/\varepsilon$ be another unit vector. Then, we observe that $\ket{e_1} = -J_\varepsilon\ket{e_2}/\varepsilon$ because of Eq. (\ref{eq:JBlockMultiplicity}). Now, by picking another unit vector $\ket{e_3}$ from ${\mathbbm{1}}_{2m}-\ket{e_1}\bra{e_1} - \ket{e_2}\bra{e_2}$, we find $\ket{e_4}$ with the same procedure. We repeat this process $m$ times. By multiplying $O$ by the orthogonal transformation given by the vectors $\ket{e_1}, \ket{e_2}, \ldots \ket{e_{2m}}$ we obtain the right transformation bringing $H$ to the form (\ref{eq:Williamson}).

\onecolumngrid
\begin{table*}
\squeezetable
\begin{center}
\begin{tabular}{ c|r|rrrr|rrrr|r|c|c} 
$n$& $\gamma$ & $\gamma_{00}$& $\gamma_{01}$& $\gamma_{10}$& $\gamma_{11}$& $\gamma_{020}$& $\gamma_{021}$& $\gamma_{120}$& $\gamma_{121}$ &$\beta_C$ & QV & $\varphi_1-\varphi_0$\\
  \hline
  3&-2&1&-1&1&1&-1&-1&1&-1&6&-2.9282032303&$\pi/2$\\
  3&0&1&-1&3&-3&1&1&1&1&12&-2.5830052443&$\pi/2$\\
  3&2&-1&1&1&1&-1&1&1&1&6&-2.5830052443&$\pi/2$\\
  \hline
  4 & 0 & 0 & 0 & 0 & 0 & 1 & -2 & 0 & -1 & 8 & -3.313708499 & $\pi/2$\\
  4 & -2 & 4 & 2 & 2 & 0 & 5 & -3 & 1 & -1 & 32 & -0.5471047512 & $0.3254696365\pi$\\
  4 & 0 & 2 & 2& 2& 2& 5&-5& 1&-1&32&-0.5115214246 & $\pi/2$\\
  4&-1&2&5&5&6&-2&-10&2&11&72&-0.4999666746&$0.3188572387\pi$\\
  4&-1&-2&-3&-3&-4&-7&7&-1&2&48&-0.4670552431&$0.278787455\pi$\\
  4&2&0&-2&-2&-4&3&1&-3&-3& 32&-0.218521874 & $0.2029607403 \pi$\\
  \hline
  5&0&1&1&-1&-1&1&-1&1&-1&12&-0.3107341487& $\pi/2$\\
  \hline
  8&0&2&-2&2&-2&-1&-1&1&1&32&-0.2187&$\pi/2$
\end{tabular}
\end{center}
\caption{Classes of inequalities of the form of Eq. (\ref{eq:BIRis2}) that are violated for different $n$ with the same set of measurements at each site. We present one representative per class; the rest can be found by applying a suitable symmetry such as a renaming of the measurements, outcomes and/or parties \cite{TI50years}. The $\gamma$'s constitute the coefficients of the Bell inequality, $\beta_C$ is the classical bound, $QV$ is the quantum violation achieved with the measurement settings defined through $\varphi_0$ and $\varphi_1$. Note that due to Eq. (\ref{eq:TUBU}), this depends only on $\varphi_1-\varphi_0$ (Recall that the measurement settings we use parametrize the quantum observables as ${\cal M}_k^{(i)} = \cos \varphi_k^{(i)}\sigma_x^{(i)}+\sin \varphi_k^{(i)}\sigma_y^{(i)}$, and we omit the index $i$ because we are in the TI case).}
\label{tab:numbers}
\end{table*}
\twocolumngrid
\end{document}